\documentclass[11pt]{myclass}
\usepackage{times}
\usepackage{euscript}

\usepackage{color,xspace}
\definecolor{MyGray}{rgb}{0.9,0.9,0.9}
\definecolor{MyRed}{rgb}{1.0,0.0,0.0}
\makeatletter

\makeatother

\definecolor{MyGreen}{rgb}{0.0,0.8,0.0}
\definecolor{MyBlue}{rgb}{0.0,0.0,1.0}

\long\def\symbolfootnote[#1]#2{\begingroup
\def\thefootnote{\fnsymbol{footnote}}\footnote[#1]{#2}\endgroup} 

\usepackage{picins}

\newlength{\ppicwd}
\newcommand\PPicCap[3]{
    \begin{minipage}{#1}
        \hspace{-0.4cm}
        \settowidth{\ppicwd}{\includegraphics{#2}}  
        \makebox[0cm][l]{\includegraphics[scale=.6]{#2}}

       \vspace{-7\lineskip}
       \begin{minipage}{\ppicwd}
         	#3
       \end{minipage}
    \end{minipage}}

\newcommand{\la}{\lambda}
\newcommand{\ka}{\kappa}

\newcommand{\eps}{\varepsilon}
\newcommand{\Eu}[1]{\ensuremath{\EuScript{#1}}}
\renewcommand{\b}[1]{\ensuremath{\mathbb{#1}}}

\newcommand{\realn}[1]{\reals^{#1}}

\newcommand{\hboln}[1]{\mathbb{H}^{#1}}
\newcommand{\PD}[1]{\text{P}(#1)}
\newcommand{\Sym}[1]{\text{S}(#1)}
\newcommand{\GL}[1]{\text{GL}(#1)}
\newcommand{\SO}[1]{\text{SO}(#1)}

\newcommand{\CAT}[1]{\text{CAT}(#1)}

\newcommand{\inner}[2]{\left<#1,#2\right>}

\newcommand{\inv}[1]{#1^{-1}}
\newcommand{\trn}[1]{#1^{T}}
\newcommand{\itr}[1]{#1^{-T}}

\newcommand{\tr}{\mathop{\mathrm{tr}}}

\newcommand{\GA}{\mathop{\mathrm{GA}}}
\newcommand{\diam}[1]{\ensuremath \textsf{diam(}#1\textsf{)}}
\newcommand{\symnorm}[1]{\sqrt{\tr(#1^2)}}

\newcommand{\actx}[2]{#1#2\trn{#1}}
\newcommand{\actxt}[2]{\trn{#1}#2#1}
\newcommand{\actxi}[2]{\inv{#1}#2\itr{#1}}

\newcommand{\mattwosym}[3]{\begin{pmatrix}#1 & #3 \\ #3 & #2\end{pmatrix}}
\newcommand{\mattwobsym}[3]{\begin{pmatrix}#1 & #3 \\ \trn{#3} & #2\end{pmatrix}}

\newcommand{\bdry}[1]{\partial {#1}} 
\newcommand{\intr}[1]{\text{int}\ {#1}}

\newcommand{\hb}[2]{B_{#2}(#1)}
\newcommand{\hs}[2]{S_{#2}(#1)}

\newcommand{\ext}[2]{E_{#1}(#2)} 

\newcommand{\ch}[1]{\mathcal{C}(#1)} 
\newcommand{\bh}[1]{\mathcal{B}(#1)} 
\newcommand{\ebh}[1]{\mathcal{B}_{\eps,q}(#1)} 
\newcommand{\bch}[1]{\bdry{\ch{#1}}}
\newcommand{\bbh}[1]{\bdry{\bh{#1}}}
\newcommand{\ich}[1]{\intr{\ch{#1}}}

\title{Computing Hulls And Centerpoints In Positive Definite Space\thanks{This research was
  supported in part by NSF SGER-0841185 and a subaward to the University of Utah under NSF award 0937060 to the Computing Research Association.}}
\author{\normalsize P. Thomas Fletcher \\ {\small\textsl{fletcher@sci.utah.edu}} \and
  \normalsize John Moeller \\ {\small \textsl{moeller@cs.utah.edu}} \and \normalsize Jeff M. Phillips
  \\ {\small\textsl{jeffp@cs.utah.edu}}\and  \normalsize  Suresh Venkatasubramanian \\
  {\small \textsl{suresh@cs.utah.edu}}}
\date{}
\begin{document}
\maketitle

\begin{abstract}

In this paper, we present algorithms for computing approximate hulls and centerpoints for collections of matrices in positive definite space. There are many applications where the data under consideration, rather than being points in a Euclidean space, are positive definite (p.d.) matrices. These applications include diffusion tensor imaging in the brain, elasticity analysis in mechanical engineering, and the theory of kernel maps in machine learning. Our work centers around the notion of a \emph{horoball}: the limit of a ball fixed at one point whose radius goes to infinity. Horoballs possess many (though not all) of the properties of halfspaces; in particular, they lack a strong separation theorem where two horoballs can completely partition the space. In spite of this, we show that we can compute an approximate ``horoball hull'' that strictly contains the actual convex hull. This approximate hull also preserves geodesic extents, which is a result of independent value: an immediate corollary is that we can approximately solve problems like the diameter and width in positive definite space. We also use horoballs to show existence of and compute approximate robust centerpoints in positive definite space, via the horoball-equivalent of the notion of depth. 

\end{abstract}

\thispagestyle{empty}
\newpage

\pagenumbering{arabic}

\section{Introduction}
\label{sec:introduction}
There are many application areas where the basic objects of interest, rather than points in Euclidean space, are symmetric positive-definite $n\times n$ matrices (denoted by $\PD n$). 
In diffusion tensor imaging~\cite{PJBasser01011994}, matrices in $\PD 3$ model the flow of water at each voxel of a brain scan. 
In mechanical engineering~\cite{Cowin1992}, stress tensors are modeled as elements of  $\PD 6$. 
Kernel matrices in machine learning are elements of $\PD n$~\cite{Shawe-Taylor2004}.

In all these areas, a problem of great interest is the analysis~\cite{Fletcher2004,Fletcher2009} of collections of such matrices (finding central points, clustering, doing regression). For all of these problems, we need the same kinds of geometric tools available to us in Euclidean space, including basic structures like halfspaces, convex hulls, Voronoi diagrams, various notions of centers, and the like. $\PD n$ is non-Euclidean; in particular, it is negatively (and variably) curved, which poses fundamental problems for the design of geometric algorithms. This is in contrast to hyperbolic space (which has constant curvature of $-1$), in which many standard geometric algorithms carry over. 

In this paper, we develop a number of basic tools for manipulating positive definite space, with a focus on applications in data analysis.

\subsection{Our Work}
\label{sec:our-work}

\paragraph*{Horoballs.} A main technical contribution of this work is the use of \emph{horoballs} as generalization of halfspaces. Suppose we allow a  ball to grow to infinite radius while always touching a fixed point on its boundary. In Euclidean space, this construction yields a halfspace; in general Cartan-Hadamard manifolds (of which $\PD n$ is a special case), this construction yields a horoball. Because of the curvature of space, horoballs are not flat and the complement of a horoball is not a horoball. However, we show that these objects can be effectively used as proxies for halfspaces, allowing us to define a number of different geometric structures in $\PD n$. 

\paragraph*{Ball Hulls.} The first structure we study is the convex hull. Apart from its importance as a fundamental primitive in computational geometry, the convex hull also provides a compact description of the boundary of a data set, can be used to define the \emph{center} of a data set (via the notion of \emph{convex hull peeling depth}~\cite{Shamos1976,Barnett1976}), and also captures extremal properties of a data set like its diameter, width and bounding volume (even in its approximate form~\cite{Agarwal2004}). 

The convex hull of a set of points in $\PD n$ can be  naturally defined as the intersection of all convex sets containing the points. Alternatively, it can be defined as the set of all points that are ``convex combinations'' (in a geodesic sense) of the input points. A significant obstacle to the convex hull in $\PD n$ is that it is not even known whether the convex hull of a finite collection of points in $\PD n$ can be represented finitely~\cite{Berger2007}. 

Another approach to defining the convex hull is via halfspaces: we can define the convex hull in Euclidean space as the intersection of all halfspaces that contain all the points. Unfortunately, even this notion fails to generalize: the relevant structures are called \emph{totally geodesic submanifolds}, and we cannot guarantee that any set of $d+1$ points admits such a submanifold passing through them. 

Our main technical contribution here is a generalization of the convex hull called the \emph{ball hull} that is based on the relationship between horoballs and halfplanes. The ball hull is the intersection of all horoballs that contain the input points.  Although the ball hull itself might require an infinite number of balls to define it, it is closed, it can be approximated efficiently, it is identical to the convex hull in Euclidean
space, and it always contains the convex hull in $\PD n$. In the process of proving this result, we
also develop a generalized notion of \emph{extent}~\cite{Agarwal2004} in positive definite space that might be of independent interest for other analysis problems. 
    
\paragraph*{Centerpoints.}
One important motivation for studying collections of points in positive definite space is to compute measures of centrality (or \emph{mean shapes})~\cite{Fletcher2009}. A robust centerpoint can be obtained by finding a point of maximum (halfspace) depth among a collection of points. We first prove, using a generalization of Helly's theorem to negatively curved spaces, that for any set of points in $\PD n$, there exists a point of large depth, where depth is defined in terms of horoballs. We then develop an algorithm to compute an approximation to such a point, using an LP-type framework. The point we compute is a geometric approximation: it does not approximate the depth of the optimal point, but is guaranteed to be close to such a point.
 
\subsection{Related Work}
\label{sec:related-work}

The mathematics of Riemannian manifolds, Cartan-Hadamard manifolds and $\PD n$ is well-understood: the book by Bridson and Haefliger~\cite{Bridson2009} is an invaluable reference on metric spaces of nonpositive curvature, and Bhatia~\cite{Bhatia2006} provides a detailed study of $\PD n$ in particular. However, there are many fewer algorithmic results for problems in these spaces. To the best of our knowledge, the only prior work on algorithms for positive definite space are the work by Moakher~\cite{M-B-} on mean shapes in positive definite space, and papers by Fletcher and Joshi~\cite{Fletcher2004} on doing principal geodesic analysis in symmetric spaces, and the robust median algorithms of Fletcher \emph{et al}~\cite{Fletcher2009} for general manifolds (including $\PD n$ and $\SO n$). 

Geometric algorithms in hyperbolic space are much more tractable.  The Poincar\'{e} and Klein models of hyperbolic space preserve different properties of Euclidean space, and many algorithm carry over directly with no modifications. Leibon and Letscher~\cite{Leibon2000} were the first to study basic geometric primitives in general Riemannian manifolds, constructing Voronoi diagrams and Delaunay triangulations for sufficiently dense point sets in these spaces. Eppstein~\cite{Eppstein2009} described hierarchical clustering algorithms in hyperbolic space.  Krauthgamer and Lee~\cite{Krauthgamer2006} studied the nearest neighbor problem for points in $\delta$-hyperbolic space; these spaces are a combinatorial generalization of negatively curved space and are characterized by global, rather than local, definitions of curvature. Chepoi \emph{et al}~\cite{Chepoi2008,Chepoi2007} advanced this line of research, providing algorithms for computing the diameter and minimum enclosing ball of collections of points in $\delta$-hyperbolic space.

\section{Preliminaries}
\label{sec:preliminaries}

$\PD{n}$ is the set of symmetric positive-definite real matrices. It is a Riemannian metric space with tangent space at point $p$ equal to $\Sym n$, the vector space of symmetric matrices with inner product $\inner{A}{B}_p=\tr(\inv{p}A\inv{p}B)$. The $\exp$ map, $\exp_p:\Sym{n}\to\PD{n}$ is defined $\exp_p(tA)=c(t)=p e^{tpA}$, where $c(t)$ is the geodesic with unit tangent $A$ and $c(0)=p$. For simplicity, we often assume that $p=I$ so $\exp_I(tA)=e^{tA}$.
The $\log$ map, $\log_p:\PD{n}\to\Sym{n}$, indicates direction and distance and is the inverse of $\exp_p$. The metric $d(p,q)=\|\log_p(q)\|=\symnorm{\log(\inv{p}q)}$.

\paragraph{Convex Hulls in $\PD n$.}
$\PD n$ is an example of a proper $\CAT 0$
space~\cite[II.10]{Bridson2009}, and as such admits a well-defined notion of convexity, in which metric balls are convex. We can define the convex hull $\ch X$ of a set of points $X$ as the smallest convex set that contains the points. This hull can be realized as the limit of an iterative procedure where we draw all geodesics between data points, add all the new points to the set, and repeat.  

\begin{lemma}[\cite{Bhatia2006}]\label{conv-hull-is-geo-union}
If $X_0 = X$ and $X_{i+1} = \bigcup_{a,b\in X_i}[a,b]$, then $\ch X = \bigcup_{i=0}^\infty X_i$.
\end{lemma}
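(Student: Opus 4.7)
The plan is to prove the two inclusions separately, showing first that $Y := \bigcup_{i=0}^\infty X_i$ is contained in $\ch X$ and then that $Y$ is itself convex, so that it must contain the smallest convex set around $X$.

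For the inclusion $Y \subseteq \ch X$, I would proceed by induction on $i$. The base case $X_0 = X \subseteq \ch X$ is immediate. For the inductive step, if $X_i \subseteq \ch X$, then for any $a,b \in X_i$ we have $a,b \in \ch X$, and since $\ch X$ is convex, the geodesic segment $[a,b]$ lies in $\ch X$. Taking the union over such pairs gives $X_{i+1} \subseteq \ch X$. Note that here we use that $\PD n$ is CAT$(0)$, so geodesics between points are unique and the notation $[a,b]$ is unambiguous, and that convexity in the CAT$(0)$ sense means exactly closure under taking geodesic segments.

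For the reverse inclusion $\ch X \subseteq Y$, it suffices to prove that $Y$ is a convex set containing $X$. Containment is immediate from $X = X_0 \subseteq Y$. For convexity, take any $a, b \in Y$. By definition of $Y$ as a countable union, there exist indices $i, j$ with $a \in X_i$ and $b \in X_j$. Setting $k = \max(i,j)$, monotonicity of the sequence $X_0 \subseteq X_1 \subseteq \cdots$ (which follows from the fact that $[a,a] = \{a\}$, so each point of $X_i$ appears as a degenerate geodesic in $X_{i+1}$) gives $a, b \in X_k$. Then $[a,b] \subseteq X_{k+1} \subseteq Y$, establishing convexity of $Y$. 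Since $\ch X$ is the smallest convex set containing $X$, we conclude $\ch X \subseteq Y$.

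The only subtle point in the argument is the passage to the countable union: one must verify that any two elements of $Y$ already appear together at some finite stage, which is what makes closure under taking segments transfer from the individual $X_i$ to $Y$ without requiring transfinite iteration. This is immediate here because the index set $\mathbb{N}$ is directed, so no further work is needed.
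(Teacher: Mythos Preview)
Your proof is correct and follows essentially the same approach as the paper: both establish $Y \subseteq \ch X$ by induction on $i$ and then prove $Y$ is convex by observing that any two points of $Y$ lie in a common $X_k$, so their segment lies in $X_{k+1}$. Your write-up is in fact a bit more explicit than the paper's (you spell out why the $X_i$ are nested and note the role of unique geodesics in $\CAT 0$), but the argument is the same.
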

    
     \begin{proof}
       We will use the notation $X_\infty = \bigcup_{i=0}^\infty X_i$. It is easy to demonstrate by straightforward induction that $X_\infty$ is contained in any convex set that contains $X$. Therefore $\ch X\supseteq X_\infty$.    
      
       We also know that if $p,q\in X_\infty$ there must be some $m$ for which $p,q\in X_m$, since $X_\infty$ is the nested union of $X_i$. Then $[p,q]\subset X_{m+1}\subset X_\infty$. This means that $X_\infty$ is convex, so $\ch X\subseteq X_\infty$.
     \end{proof}

    Berger~\cite{Berger2007} notes that it is unknown whether the convex hull of three points is in general closed, and the standing conjecture is that it is not. The above lemma bears this out, as it is an infinite union of closed sets, which in general is not closed. These facts present a significant barrier to the computation of convex hulls on general manifolds.

\subsection{Busemann Functions}
  In $\realn d$, the convex hull of a finite set can be described by a finite number of hyperplanes each supported by $d$ points from the set.  A hyperplane through a point may also be thought of as the limiting case of a sphere whose center has been moved away to infinity while a point at its surface remains fixed.
We generalize this notion with the definition of a Busemann function.

For this notion to work, we must restrict ourselves to a class of spaces called $\CAT{0}$ spaces. They are metric spaces with non-positive curvature. Additionally, they must be \emph{complete}; that is, Cauchy sequences in the space must converge to a point in the space. Euclidean space, hyperbolic space, and $\PD{n}$ are all examples of complete $\CAT{0}$ spaces. To talk about ``sending a point away to infinity,'' we must provide a rigorous definition of what we mean by \emph{infinity} in a complete $\CAT{0}$ space.

Two geodesic rays $c_1,c_2:\reals^+\to M$ in a complete $\CAT{0}$ space $M$ are \emph{asymptotic} if $\lim_{t\to\infty}d(c_1(t),c_2(t)) < K$ for some $K\in\reals^+$.  
If $c_1$ and $c_2$ are asymptotic, then we say $c_1 \sim c_2$.  
This forms an equivalence relation $\sim$ so that $[c]$ describes the set of all geodesics $c'$ such that $c \sim c'$.  Let $\xi = [c]$ where $\xi$ is identified with the limit of any geodesic ray asymptotic to $c$.  We say that $\xi$ is a point at infinity. Moreover, for any point $x\in M$ we can find a member of $[c]$ that issues from $x$~\cite[II.8]{Bridson2009}.

\begin{defn}
For a complete $\CAT{0}$ space $M$,  
given a geodesic ray $c(t):\reals^+\to M$, a \emph{Busemann function} $b_c : M \to \reals$ is defined
\[
  b_c(p) = \lim_{t\to\infty}d(p,c(t))-t.
\]
\end{defn}
It should be noted that if we construct a Busemann function from any geodesic ray in $[c]$, it is the same function up to addition by a constant~\cite[II.8]{Bridson2009}. It's convenient then to normalize a Busemann function by assuring that $b_c(I)=0$.

A Busemann function is an example of a \emph{horofunction}~\cite[II.8]{Bridson2009}. A \emph{horosphere} $\hs{h}{r}\subset M$ is a level set of a horofunction $h$; that is, $\hs{h}{r} = h^{-1}(r)$, where $r\in\reals$.  A \emph{horoball} $\hb{h}{r}\subset M$ is a sublevel set of $h$; that is, $\hb{h}{r} = h^{-1}((-\infty,r])$. Horofunctions are convex~\cite[II.8]{Bridson2009}, so any sublevel set of a horofunction is convex, and therefore any horoball is convex.

\vspace{-6 pt}
\paragraph {Example: Busemann functions in $\realn{n}$.} 
As an illustration, we can easily compute the Busemann function in Euclidean space associated with a ray $c(t) = t\mathbf{u}$, where $\mathbf{u}$ is a unit vector. Since $\lim_{t\to\infty}\frac 1{2t}(\|p-t\mathbf{u}\|+t) = 1$, 
\begin{align*}
b_c(p) &
= \lim_{t\to\infty}(\|p-t\mathbf{u}\|-t) \\& 
= \lim_{t\to\infty}\frac 1{2t}(\|p-t\mathbf{u}\|^2-t^2) \\&
= \lim_{t\to\infty}\frac 1{2t}(\|p\|^2-2\inner p{t\mathbf{u}}+\|t\mathbf{u}\|^2-t^2) \\& 
= \lim_{t\to\infty}\frac {\|p\|^2}{2t}-\inner{p}{\mathbf{u}} 
= -\inner{p}{\mathbf{u}}.
\end{align*}

Horospheres in Euclidean space are then just hyperplanes, and horoballs are halfspaces.

\subsubsection{Decomposing $\PD n$}
\label{sec:decomp-pd}

In order to construct Busemann functions in $\PD n$ it is necessary to decompose the space into simpler components. The notion of a \emph{horospherical projection} will be very useful. 

\paragraph{The horospherical group.} 

There is a subgroup of $\GL n$, $N_\xi$ (the \emph{horospherical group}), that leaves the Busemann function $b_c$ invariant~\cite[II.10]{Bridson2009}. That is, given $p\in\PD n$, and $\nu\in N_\xi$, $b_c(\nu p\trn\nu) = b_c(p)$.
Let $A$ be diagonal, 
where $A_{ii}>A_{jj}$, $\forall i > j$. Let $c(t) = e^{tA}$, and $\xi = c(\infty)$.
Then $\nu\in N_\xi$ if and only if $\nu$ is a upper-triangular matrix with 
ones on the diagonal\footnote{For simplicity, we consider only those rays
  with unique diagonal entries, but this definition may be extended to
  those with multiplicity.}.
If $A \in S(n)$ is not sorted-diagonal, we may still use this characterization of $N_\xi$ without loss of generality, since we may compute an appropriate diagonalization $A=QA'\trn Q$, $Q\trn{Q}=I$, then apply the isometry $\trn{Q}pQ$ to any element $p\in\PD n$.

\paragraph{Flats.} 

Let $A\in\Sym n$ and $c(t)=e^{tA}$ as above. If we consider all elements $f\in\PD n$ that share eigenvectors $Q$ with $e^A$,
then all such elements commute with each other and $fe^A=e^Af$.  
We call this space $F$, the \emph{$n$-flat} containing $c$. 
Since we may assume that $Q\in\SO{n}$, every flat $F$ corresponds to an element of $\SO{n}$. Moreover, since members of $F$ commute, $\log(ab) = \log a + \log b $ for all $a,b \in F$.  So if $u$ and $v$ are in $F$, then the distance between them is 
$\symnorm{\log(\inv{u}v)}=\symnorm{(\log(v)-\log(u))}$. 
Since $\symnorm{(\cdot)}$ is a Euclidean norm on $\log(F)$,
we have that $F$ is isometric to $\realn n$ with a Euclidean metric under $\log(\cdot)$.  
  
\paragraph{Horospherical projection.} 

Given $p\in\PD n$, there is a unique decomposition $p=\nu f\trn\nu$ where $(\nu,f)\in N_\xi\times F$~\cite[II.10]{Bridson2009}.
  Let $p\in\PD n$ and $(\nu,f)\in N_\xi\times F$. If $p=\nu f\trn\nu$, then define the \emph{horospherical projection function} $\pi_F:\PD n\to F$ as $\pi_F(p)=\inv\nu p\itr\nu=f$.

\subsubsection{Busemann functions in $\PD n$.}
We can now give an explicit expression for a Busemann function in $\PD n$. For geodesic $c(t)=e^{tA}$, where $A\in\Sym n$, the Busemann function 
$b_c:\PD n\to\reals$ is 
  \[
    b_c(p) = -\tr(A\log(\pi_F(p))),
  \]
  where $\pi_F$ is defined as above~\cite[II.10]{Bridson2009}.

In $\PD 2$ it is convenient to visualize Busemann functions through horospheres.  We can embed $\PD 2$ in $\b{R}^3$ where the log of the determinant of elements grows along one axis.   The orthogonal planes contain a model of hyperbolic space called the \emph{Poincar\'{e} disk} that is modeled as a unit disk, with boundary at infinity represented by the unit circle.  Thus the entire space can be seen as a cylinder, as shown in Figure \ref{fig:horosphere}.  Within each cross section with constant determinant, the horoballs are disks tangent to the boundary at infinity.

\begin{figure}
\begin{center}
\includegraphics[scale=.6]{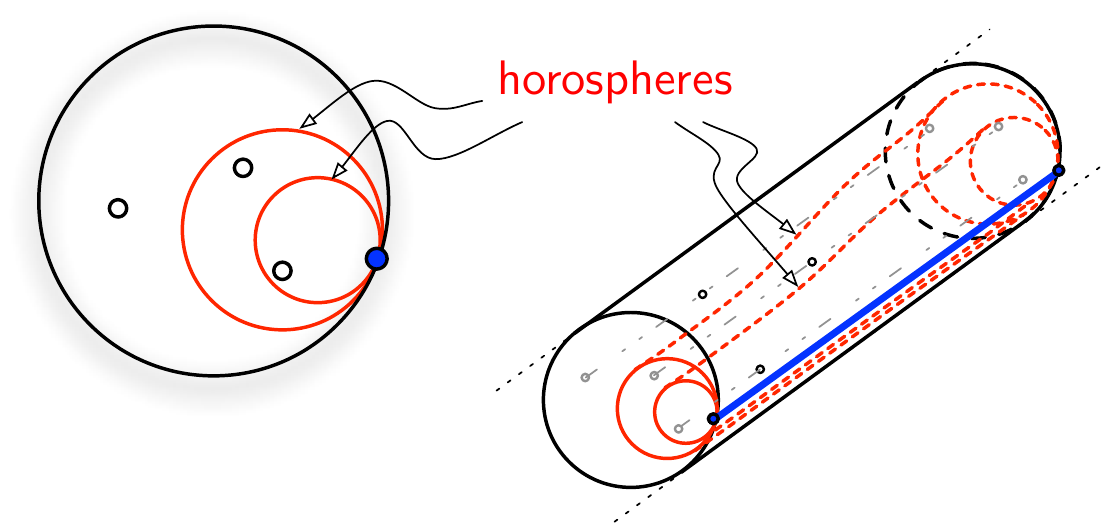}
\end{center}
\vspace{-.3in}
\caption{\label{fig:horosphere}\textsf{\small Left: projection of $X \subset \PD{2}$ onto $\det(x)=1$. Right: $X\subset\PD{2}$.  Two horospheres are drawn in both views.  }}
\vspace{-.15in}
\end{figure}

\section{Ball Hulls}
\label{sec:ball-hulls}

We now introduce our variant of the convex hull in $\PD n$, which we call the ball hull.  
For a subset $X \subset\PD n$, the \emph{ball hull} $\bh X$ is the intersection of all horoballs that also contain $X$: 
        \[
            \bh X = \bigcap_{b_c,r} \hb{b_c}{r},\ X\subset \hb{b_c}{r}.
        \]
Note that the ball hull can be seen as an alternate generalization of the Euclidean convex hull (i.e. via intersection of halfspaces) to $\PD n$.  Furthermore, since it is the intersection of closed sets, it is itself guaranteed to be closed.

\subsection{Properties Of The Ball Hull}
\label{sec:properties-ball-hull}

We know that any horoball is convex. Because the ball hull is the intersection of convex sets, it is itself convex (and therefore $\ch X\subseteq\bh X$).  
We can also show that it shares critical parts of its boundary with the convex hull (Theorem \ref{thm:bdr-pt}), but unfortunately, we cannot represent it as a finite intersection of horoballs (Theorem \ref{thm:infinite-HB}).

    \begin{theorem}
       Every $x\in X$ ($X$ finite) on the boundary of $\bh X$ is also on the boundary of $\ch X$ (i.e., $X\cap\bbh X\subseteq X\cap\bch X$).
\label{thm:bdr-pt}
    \end{theorem}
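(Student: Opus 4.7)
The plan is to prove this by a direct topological argument using the containment $\ch X \subseteq \bh X$, which has already been established in the paragraph preceding the statement (since every horoball is convex, $\bh X$ is a convex set containing $X$, and $\ch X$ is the smallest such set).

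First I would reduce the statement to showing that $\intr \ch X \subseteq \intr \bh X$. This is an immediate consequence of $\ch X \subseteq \bh X$: if $x \in \intr \ch X$, then some open neighborhood $U$ of $x$ lies inside $\ch X$, hence also inside $\bh X$, so $x \in \intr \bh X$. Contrapositively, any point not in $\intr \bh X$ is not in $\intr \ch X$ either.

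Now let $x \in X \cap \bbh X$. By definition of boundary, $x \in \bh X$ and $x \notin \intr \bh X$. By the contrapositive above, $x \notin \intr \ch X$. On the other hand, $x \in X \subseteq \ch X \subseteq \cmb{\ch X}$, so $x \in \cmb{\ch X} \setminus \intr \ch X = \bch X$. This gives $x \in X \cap \bch X$, as required.

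I do not anticipate a real obstacle: the argument is purely topological and does not use any geometry of $\PD n$ beyond what is needed to know that $\bh X$ is a well-defined convex set containing $X$. The only mild subtlety worth flagging in the write-up is that $\ch X$ is not known to be closed (as noted after Lemma \ref{conv-hull-is-geo-union}), so one must use the definition $\bch X = \cmb{\ch X} \setminus \intr \ch X$ rather than $\ch X \setminus \intr \ch X$; membership $x \in \ch X$ still suffices since $\ch X \subseteq \cmb{\ch X}$.
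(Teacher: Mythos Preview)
Your argument is correct and is in fact a streamlined version of what the paper does. Both proofs show that $x\in\ich X$ is impossible by producing an open neighborhood of $x$ inside a set containing $\ch X$; the paper's proof takes a detour through a specific supporting horoball. From $x\in\bbh X$ the paper asserts the existence of a horofunction $h$ with $X\subset\hb{h}{r}$ and $h(x)=r$, then argues that the neighborhood $U\subset\ch X$ must lie in $\hb{h}{r}$ (by convexity), forcing $h(x)<r$. Your route bypasses this entirely: you use the already-established inclusion $\ch X\subseteq\bh X$, pass to interiors, and conclude directly that $x\notin\intr\bh X$ implies $x\notin\intr\ch X$.

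What your approach buys is twofold. First, it is purely topological and makes no appeal to the geometry of horoballs beyond the containment $\ch X\subseteq\bh X$. Second, it sidesteps a step the paper leaves unjustified: the claim that every boundary point of an intersection of horoballs lies on the boundary of one of the horoballs in the family. That claim is plausible here (via a compactness argument on the boundary at infinity), but it is not automatic for infinite intersections of closed sets, and the paper does not supply the argument. Your observation about using $\cmb{\ch X}\setminus\intr\ch X$ rather than $\ch X\setminus\intr\ch X$, given that $\ch X$ need not be closed, is a nice point of care that the paper glosses over.
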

    \begin{proof}
        Since $X\subset\ch X$, either $x\in\bch X$ or $x\in\ich X$. Assume that $x\in\ich X$. Then there is a neighborhood $U$ of $x$ contained wholly in $\ch X$. Because $x\in\bbh X$, there is a horofunction $h$ such that $X\subset\hb h r$ and $h(x) = r$.  Since $\hb h r$ is convex, $U\subset\ch X\subseteq\hb h r$. This implies that $h(x)<r$, but $h(x) = r$, a contradiction. Thus $x\in\bch X$.
    \end{proof}

\begin{theorem}
In general, the ball hull cannot be described as the intersection of a finite set of horoballs.
\label{thm:infinite-HB}
\end{theorem}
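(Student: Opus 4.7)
I would prove this by exhibiting a concrete finite set $X \subset \PD n$ whose ball hull has a boundary requiring a continuum of distinct supporting horoballs. For each point at infinity $\xi$, define $\hb{b_\xi}{r_\xi}$ with $r_\xi = \max_{x \in X} b_\xi(x)$ to be the tightest horoball based at $\xi$ containing $X$. Since $\bh X \subseteq \hb{b_\xi}{r_\xi}$ and the maximum of $b_\xi$ over $X \subseteq \bh X$ is already $r_\xi$, the horosphere $\partial \hb{b_\xi}{r_\xi}$ is tangent to $\partial \bh X$ at some nonempty support set $T_\xi \subseteq \partial \bh X$. Varying $\xi$ over the boundary at infinity yields an uncountable family of pairwise distinct supporting horoballs, since horoballs with different points at infinity are themselves distinct.

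For the contradiction, suppose $\bh X = \bigcap_{i=1}^k \hb{b_{\xi_i}}{r_i}$. The plan is to choose $X$ so that $\partial \bh X$ contains a positive-dimensional set of \emph{regular} boundary points, meaning points where $\partial \bh X$ admits a unique tangent hyperplane direction. At such a point $p^*$, the supporting horoball must be uniquely determined, because distinct $\xi$'s produce distinct initial tangent directions for the geodesic rays issuing from $p^*$, hence distinct horosphere tangent hyperplanes at $p^*$. Since $p^*$ must lie on some $\partial \hb{b_{\xi_i}}{r_i}$ and also on $\partial \hb{b_\xi}{r_\xi}$ for each $\xi$ with $p^* \in T_\xi$, uniqueness forces $\xi \in \{\xi_1,\ldots,\xi_k\}$. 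As $p^*$ ranges over the positive-dimensional set of regular boundary points, $\xi$ traces a positive-dimensional subset of the boundary at infinity, contradicting finiteness of $\{\xi_1,\ldots,\xi_k\}$.

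The main obstacle is producing an $X$ for which $\partial \bh X$ genuinely has regular boundary points whose supporting directions sweep out a continuum. This is not automatic: small configurations can yield ball hulls whose boundary consists entirely of a few horospherical pieces meeting at corners, which \emph{is} a finite intersection of horoballs. One would therefore choose $X$ with enough geometric complexity to exploit the variable curvature of $\PD n$ --- for instance, several points in general position not contained in a common flat --- and then verify that $\partial \bh X$ has a smoothly curved open piece on which the support map to the boundary at infinity is continuous and non-constant. A secondary technical point is that the "unique tangent direction at infinity" claim relies on the fact that two asymptotic geodesic rays issuing from a common point in $\PD n$ must have the same initial direction, which follows from the $\CAT 0$ property and strict convexity of horospheres transverse to flats.
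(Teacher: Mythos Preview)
Your proposal is a strategy, not a proof: the entire weight of the argument rests on exhibiting a specific $X$ and verifying that $\partial\bh X$ contains a smoothly curved open piece on which the support map $p^*\mapsto\xi$ is continuous and non-constant, and you explicitly defer this (``One would therefore choose $X$\ldots and then verify\ldots''). Without that construction and verification, nothing has been shown. Worse, the verification step is essentially the theorem itself in disguise: saying that the support direction varies continuously and non-constantly along a piece of $\partial\bh X$ is the same as saying that piece is not contained in finitely many horospheres, which is what you set out to prove. So the abstract machinery of regular points and unique tangent directions, while plausible, does not reduce the problem --- it just restates it.

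The paper's proof avoids all of this by working with a two-point set $X=\{x_1,x_2\}$ inside a totally geodesic copy of $\hboln 3$ sitting in $\PD 4$, placed symmetrically about $I$. In the Poincar\'e ball model the horospheres through both $x_1$ and $x_2$ are literal Euclidean spheres tangent to the ideal boundary along the great circle equidistant from $x_1$ and $x_2$; their intersection is a spindle (a three-dimensional lune). One sees directly that removing any one of this circle's worth of horoballs strictly enlarges the intersection, so no finite subfamily suffices. The point is that by passing to a constant-curvature submanifold the ball hull can be \emph{computed exactly}, and the infinitude of necessary horoballs is then a concrete geometric fact about spheres in a ball, not an abstract statement about support maps. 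Your route could in principle be completed, but you would still need to produce such an example and carry out the same kind of explicit analysis --- at which point the regular-point framework is superfluous.
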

\begin{proof}
We construct an example in $\PD{4}$ with a point set $X = \{x_1, x_2\}$ of size $2$ where the ball hull cannot be described as the intersection of a finite number of horoballs.  Let $X \subset \hboln{3}$, the three dimensional hyperbolic space, as embedded in $\PD{4}$.
In particular, let the geodesic that contains $x_1$ and $x_2$ also contain $I$, the identity matrix, at their midpoint on the geodesic.

Consider the family of horofunctions $\Eu{H}$ such that for $h \in \Eu{H}$, $h(x_1) = h(x_2)$. By construction, $h(x_1)=h(x_2)=r$ for some $r\in\reals$. In the Poincar\'e ball model of $\hboln{3}$, the horospheres $\hs{h}{r}$ are literally spheres that are tangent to the boundary at infinity, and touch $x_1$ and $x_2$. So constructed, any horosphere in $\Eu{H}$ will contact the boundary at infinity on a great circle that is equidistant from both points.

The ball hull $\bh X$ is defined $\{\bigcap \hb{h}{r} \mid h \in \Eu{H}\}$, and is a ``spindle'' (a three dimensional lune) with tips at $x_1$ and $x_2$ and bulges out from the geodesic segment between them.  Any finite family of horofunctions will intersect in a region strictly larger than $\bh X$, so every horoball generated by a member in $\Eu{H}$ is necessary for $\bh X$, and so there is no finite set of horoballs that describe $\bh X$.
\end{proof}

\section{The $\eps$-Ball Hull}
\label{sec:algorithm-eps-ball}

Theorem \ref{thm:infinite-HB} indicates that we cannot maintain a finite representation of a ball hull.  However, as we shall show in this section, we can maintain a finite-sized \emph{approximation} to the ball hull. Our approximation will be in terms of \emph{extents}: intuitively, we say that a set of horoballs approximates the ball hull if a geodesic traveling in any direction traverses approximately the same distance inside the ball hull as it does inside the approximate hull. 

\parpic[r]{\PPicCap{2cm}{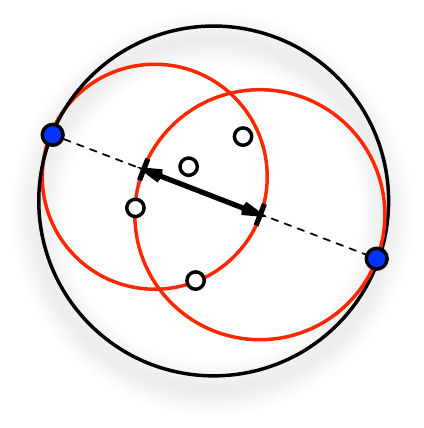}{\hspace{.1in} \textsf{\small horoextent}}}
In Euclidean space, we can capture extent by measuring the distance between two parallel hyperplanes that sandwich the set. Since measuring extent by recording the distance between two parallel planes does not have a direct analogue in $\PD n$, we define a notion we call a \emph{horoextent}.
  Let $c(t)=qe^{t\inv{q}A}$ be a geodesic, and $X\subset\PD n$. The \emph{horoextent} $\ext{c}{X}$ with respect to $c$ is defined as:  
\begin{equation*}
    \ext{c}{X} 
    = \left| \max_{p\in X}b_{c+}(p) + \max_{p\in X}b_{c-}(p) \right|,
\end{equation*}
  where $b_{c+}$ is the Busemann function created when we allow $t$ to approach positive infinity as normal, while $b_{c-}$ is the Busemann function created when we allow the limit to go the other direction; that is: 
  \[
    b_{c+}(p) = \lim_{t\to+\infty}(d(c(t),p)-t),\quad 
    b_{c-}(p) = \lim_{t\to-\infty}(d(c(t),p)+t).
  \] 
  Observe that for any $c$, $\ext{c}{X} = \ext{c}{\ch X} = \ext{c}{\bh X}$.
  (Note that we cannot simply substitute $\min b_{c+}$ for $\max b_{c-}$; the horoballs are generated by Busemann functions tied to \emph{opposite} points at infinity.)
  
  If we were to compute $\ext{c}{X}$ in a Euclidean space, it would be apparent that the extent would be the distance between two parallel planes. 
  Because the minimum distance between two Euclidean horoballs is a constant, no matter where we place the base point $q$, we can measure horoextent simply by measuring width in a particular direction. 
  However, this is not true in general. For instance in $\PD{n}$, horofunctions are nonlinear, so the distance between opposing horoballs is not constant. The width of the intersection of the opposing horoballs is taken along the geodesic $c$, and a geodesic is described by a point $q$ and a direction $A$. 
  We fix the point $q$ so that we need only choose a uniform grid of directions $A$ for our approximation.

\begin{defn}
\label{defn:eps-ball-hull}
  An intersection of horoballs is called an \emph{$\eps$-ball hull with origin $q$} ($\ebh X$)  if for all geodesic rays $c$ such that $c(0)=q$, $|\ext{c}{\ebh X} - \ext{c}{X}|\leq\eps$.  When $q$ is clear, we will refer to $\ebh X$ as just an $\eps$-ball hull. 
\end{defn}

\paragraph{Shifting the origin.}  
Let the \emph{geodesic anisotropy}~\cite{M-B-} of a point $p\in\PD n$ be defined as $\GA(p)=d(\sqrt[n]{\det(p)}I, p)$ (so if $\det(p) = 1$ then $\GA(p) = d(I,p)$).  Let $d_X = \max_{p \in X} d(p,I) \geq \max_{p \in X} \GA(p)$.  
The size of the $\eps$-ball hulls we construct will depend $d_X$, but this is not an intrinsic parameter of the data, since we can change it merely by isometrically translating the point set.  
For some point $q \in \ch{X}$, if we could translate the data set so that $q$ was at the origin $I$, then $d_X \leq \diam{X} = \max_{p,q \in X} d(p,q)$.  
We now prove that such a translation is always possible. 

\begin{lemma}
\label{lem:shifting-origin}
  For a point $q\in\PD{n}$, a geodesic $c$ such that $c(0)=q$, and a point set $X$, if we define an operation $\hat{S}$ on a set $S\in\PD{n}$ such that $\hat{p}=q^{-\frac{1}{2}}pq^{-\frac{1}{2}}$ for any $p\in S$, then 
  \[
    \ext{c}{X} = \ext{\hat{c}}{\hat{X}}.
  \]
\end{lemma}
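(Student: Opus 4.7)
The plan is to recognize the ``hat'' operation as a concrete isometry of $\PD n$, and then to observe that isometries preserve geodesic rays, Busemann functions (read off either asymptotic direction), and therefore horoextents. Concretely, define $\phi:\PD n\to\PD n$ by $\phi(p)=q^{-1/2}pq^{-1/2}$, so that $\hat p=\phi(p)$, $\hat X=\phi(X)$, and $\phi(q)=I$. By the general action of $\GL n$ on $\PD n$ (which the paper already invokes for orthogonal $Q$ when diagonalizing and for arbitrary changes of basepoint), $\phi$ is an isometry. If one prefers a direct check, note that $\phi(p_1)^{-1}\phi(p_2)=q^{1/2}p_1^{-1}p_2\, q^{-1/2}$ is conjugate to $p_1^{-1}p_2$, so $\log\bigl(\phi(p_1)^{-1}\phi(p_2)\bigr)$ is conjugate to $\log(p_1^{-1}p_2)$ and the trace-squared norm agrees, giving $d(\phi(p_1),\phi(p_2))=d(p_1,p_2)$.

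Next, I would use that $\phi$ sends geodesics to geodesics with the same arc-length parameter. Since $c(0)=q$, the curve $\hat c(t):=\phi(c(t))$ is a geodesic with $\hat c(0)=I$, and crucially this is precisely the ``hatted'' geodesic appearing on the right-hand side of the lemma. Then for the forward Busemann function, a direct substitution gives
\[
b_{\hat c+}(\hat p)=\lim_{t\to+\infty}\bigl(d(\hat c(t),\hat p)-t\bigr)=\lim_{t\to+\infty}\bigl(d(c(t),p)-t\bigr)=b_{c+}(p),
\]
since $\phi$ preserves distance. The identical computation, with the sign convention $b_{c-}(p)=\lim_{t\to-\infty}(d(c(t),p)+t)$, yields $b_{\hat c-}(\hat p)=b_{c-}(p)$.

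Finally, since $\hat X=\phi(X)$ is just a relabeling of $X$ by $\phi$, taking maxima over $X$ versus $\hat X$ gives
\[
\max_{\hat p\in\hat X} b_{\hat c\pm}(\hat p)=\max_{p\in X} b_{c\pm}(p),
\]
so plugging into the definition of horoextent yields $\ext{\hat c}{\hat X}=\ext{c}{X}$, as required.

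The only real subtlety is justifying that $\phi$ is an isometry with respect to the metric $d(p,q)=\symnorm{\log(p^{-1}q)}$; once that is in hand, everything else is bookkeeping with limits and maxima. A secondary point to be careful about is that the two Busemann functions $b_{c+}$ and $b_{c-}$ are tied to \emph{opposite} ideal endpoints of the bi-infinite geodesic $c$, so one must confirm that $\phi$ pushes both endpoints forward consistently; this is automatic because $\phi$ acts on the whole curve $c(t)$ and commutes with the parametrization.
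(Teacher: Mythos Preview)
Your proof is correct and follows essentially the same approach as the paper: both recognize $\phi(p)=q^{-1/2}pq^{-1/2}$ as an isometry of $\PD n$, use this to transfer $c$ to a geodesic $\hat c$ based at $I$, and conclude that Busemann functions (hence horoextents) are preserved. The paper additionally writes out the explicit form $\hat c(t)=e^{t\hat A}$ with $\hat A=q^{-1/2}Aq^{-1/2}$ and checks $\|\hat A\|_I=\|A\|_q$, while you instead verify the isometry directly via conjugacy of $\phi(p_1)^{-1}\phi(p_2)$ with $p_1^{-1}p_2$ and are a bit more careful in treating $b_{c+}$ and $b_{c-}$ separately; these are cosmetic differences.
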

\begin{proof}

  Let $c(t)=qe^{t\inv{q}A}$. Then 
  $
    \hat{c}(t) 
    = q^{-\frac{1}{2}}(qe^{t\inv{q}A})q^{-\frac{1}{2}}
    = q^{\frac{1}{2}}e^{t\inv{q}A}q^{-\frac{1}{2}}
    = e^{tq^{-\frac{1}{2}}Aq^{-\frac{1}{2}}}
    = e^{t\hat{A}}.
  $
  Note that 
  $
    \|A\|_q 
    = \sqrt{\tr((q^{-\frac{1}{2}}Aq^{-\frac{1}{2}})^2)}
    = \|\hat{A}\|_I
  $, so $\hat{c}$ is a geodesic such that $\hat{c}(0)=I$ with the same speed as $c$. 
  Let $b_c$ be the Busemann function of $c$, so $b_c(p)=\lim_{t\to\infty}(d(c(t),p)-t)$. Since conjugation by $q^{-\frac{1}{2}}$ is an isometry on $\PD{n}$, 
  \[
    b_c(p)
    = \lim_{t\to\infty}(d(c(t),p)-t)
    = \lim_{t\to\infty}(d(\hat{c}(t),\hat{p})-t)
    = b_{\hat{c}}(\hat{p}),
  \]
  and therefore
  \[
    \ext{c}{X} 
    = \ext{\hat{c}}{\hat{X}}.
  \]
\end{proof}

For convenience, we will now assume that our data has been shifted into a reasonable frame where some point $q \in \ch{X}$ is the base point of our horofunction.  That is in this shifted frame $I \in \ch{X}$ and, we can bound, $d_X \leq \diam{X}$.

\paragraph{Main result.}
The main result of this section is a construction of a finite-sized $\eps$-ball hull.

\begin{theorem}
 For a set $X \subset \PD n$ of size $N$ (for constant $n$), we can construct an $\eps$-ball hull of size $O((\sinh(d_X)/\eps)^{n-1}\cdot N^{\lfloor n/2 \rfloor})$ in time $O((\sinh(d_X)/\eps)^{n-1} (N^{\lfloor n/2 \rfloor} + N \log N))$. 
 \label{thm:epsBH-alg}
\end{theorem}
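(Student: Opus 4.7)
The plan is to construct $\ebh X$ as a finite intersection of supporting horoballs sampled on a carefully chosen grid of geodesic directions through $q$. Before building the grid I invoke Lemma \ref{lem:shifting-origin} to isometrically conjugate the data so that $q=I$; this ensures $d_X \leq \diam X$ and places us in the standard frame in which the horospherical decomposition $p=\nu f \trn\nu$ is most convenient. All subsequent work takes place in this shifted frame, and the final answer is transported back by the inverse conjugation.

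Every geodesic ray issuing from $I$ has the form $c(t)=e^{tA}$ for a unit $A\in\Sym n$, which I diagonalize as $A=QDQ^T$ to single out the flat $F=F_Q$ that contains $c$. The key structural reduction is that, once $F$ is fixed, the Busemann function $b_c(p)=-\tr(A\log\pi_F(p))$ is a \emph{linear} functional of $\log\pi_F(p)\in\log F\cong\realn n$. Consequently the extremal horoballs containing $X$ whose directions lie in $F$ are in bijection with the facets of the Euclidean convex hull of $\log\pi_F(X)\subset\realn n$. By the Upper Bound Theorem this convex hull has combinatorial complexity $O(N^{\lfloor n/2\rfloor})$ and can be built in time $O(N^{\lfloor n/2\rfloor}+N\log N)$, which accounts for the $N^{\lfloor n/2\rfloor}$ and $N\log N$ factors in the final bound.

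To cover all directions, not merely those inside one fixed flat, I lay down an angular net of representative directions at resolution $\eta=\Theta(\eps/\sinh(d_X))$. After quotienting by the symmetries of $\PD n$ (Weyl group of the flat, $\SO n$ action on frames) the inequivalent directions form an $(n-1)$-dimensional set, so the net has size $O((\sinh(d_X)/\eps)^{n-1})$. For each net direction I compute the Euclidean convex hull above and include its supporting horoballs in $\ebh X$, giving the claimed overall size and running time. The correctness claim $|\ext c X - \ext c {\ebh X}|\leq\eps$ reduces to a pointwise estimate: for any direction $c$ from $I$ and its nearest grid neighbor $c'$, we need $|b_c(p)-b_{c'}(p)|\leq\eps/2$ for every $p\in X$. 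This is a Lipschitz bound in the direction argument, proved via Rauch/Toponogov comparison — two unit-speed geodesics from $I$ separated by angle $\theta$ diverge at rate $\sinh(d)\cdot\theta$ at arclength $d$ on a $\CAT 0$ manifold, so at the level of Busemann limits evaluated against $p$ (with $d(I,p)\leq d_X$) the difference is $O(\theta\sinh(d_X))$. Plugging in $\theta=\eta$ yields the desired error bound.

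The main obstacle will be the two quantitative estimates underlying the counting: first, the Lipschitz bound on $b_c(p)$ in the direction argument with exactly the $\sinh(d_X)$ factor, which requires comparison on the variably-curved $\PD n$ rather than the constant-curvature hyperbolic case; second, the claim that after factoring through flats and Weyl symmetry the inequivalent directions really form an $(n-1)$-dimensional net, so that the grid size is $O((\sinh(d_X)/\eps)^{n-1})$ rather than the naive $(\sinh(d_X)/\eps)^{\dim\Sym n-1}$. Once both of these are in place the assembly of horoballs and the time accounting are routine.
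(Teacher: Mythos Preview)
Your high-level architecture matches the paper exactly: shift to $q=I$ via Lemma~\ref{lem:shifting-origin}, grid the space of flats through $I$, and within each flat exploit the linearity of $b_c(p)=-\tr(A\log\pi_F(p))$ to reduce to a Euclidean convex hull of $\log\pi_F(X)$, whose facet complexity via the Upper Bound Theorem gives the $N^{\lfloor n/2\rfloor}$ and $N\log N$ factors. That part is fine.

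The genuine gap is your Lipschitz estimate. You propose Rauch/Toponogov comparison on geodesic divergence, but the naive route---bounding $|d(p,c(t))-d(p,c'(t))|\leq d(c(t),c'(t))$ and taking $t\to\infty$---blows up, and nothing in your sketch explains why the bound should depend on $d(I,p)\leq d_X$ rather than on $t$. The paper sidesteps this entirely via an identity you do not mention (Lemma~\ref{lem:horo-rotate}): because conjugation by $Q\in\SO n$ is an isometry of $\PD n$, one has $b_{c'}(p)=b_c(Q^T p Q)$. This converts ``rotate the direction of $c$'' into ``move $p$ along its $\SO n$-orbit while holding $b_c$ fixed,'' and that orbit is a \emph{bounded} curve. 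Combined with $\|\nabla b_c\|\equiv 1$ (Lemma~\ref{lem:nabla}), the Lipschitz bound reduces to estimating the arclength of this orbit. The paper does this first in $\PD 2$, where the orbit is a circle of radius $\GA(p)$ sitting in the constant-determinant slice of curvature $-1/2$ (Lemmas~\ref{lem:rotate}--\ref{lem:constant-curve}), giving speed $\sqrt 2\sinh(\GA(p)/\sqrt 2)$; it then lifts to $\PD n$ by factoring an arbitrary $Q$ into $\binom n2$ planar Givens rotations and applying the $\PD 2$ bound to each factor (Lemma~\ref{lem:Lipschitz-PDn}, Theorem~\ref{lem:Lipschitz-gen}). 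Without the isometry trick, your comparison-geometry route does not yield a finite bound, let alone the specific $\sinh(d_X)$ dependence.

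Your second ``obstacle,'' the dimension count for the grid, is also not resolved by the argument you give. The paper grids $\SO n$ (the space of flats) and handles the directions \emph{inside} each flat exactly via the Euclidean hull; there is no further ``$\SO n$ action on frames'' to quotient by, and the Weyl group is finite so it cannot reduce dimension. Whatever the correct exponent is, it is not obtained by the symmetry quotient you describe.
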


\paragraph{Proof Overview.}
\label{sec:proof-overview}

We make much use of the structure of flats in our proof, so it is helpful to describe some conventions. Consider the set of unit-length tangent vectors at $I$, part of the tangent space $\Sym{n}$; in other words, the set of ``directions'' from $I$. If we choose a flat $F$ to work in, then the tangent space of $F$ contains a subset of those directions. All these directions, though, share the rotation $Q$ identified with $F$. So in much of the rest of the paper, we refer to this rotation $Q$ as a ``direction,'' even though it is not a member of the tangent space. 

Our proof uses two key ideas. First, we show that within a flat $F$ (i.e., given a direction $Q \in \SO n$) we can find a finite set of minimal horoballs exactly.  This is done by showing an equivalence between halfspaces in $F$ and horoballs in $\PD n$ in Section \ref{sec:projection-k-flat}.  The result implies that computing minimal horoballs with respect to a direction $Q$ is equivalent to computing a convex hull in Euclidean space.

Second, we show that instead of searching over the entire space of directions $\SO n$, we can discretize it into a finite set of directions such that when we calculate the horoballs with respect to each of these directions, the horoextents of the resulting $\eps$-ball hull are not too far from those of the ball hull.  In order to do this, we prove a Lipschitz bound for horofunctions (and hence horoextents) on the space of directions.  
Since any two flats $F$ and $F'$ are identified with rotations $Q$ and $Q'$, we can move a point from $F$ to $F'$ simply by applying the rotation $\trn{Q}Q'$, and measure the angle $\theta$ between the flats.  If we consider a geodesic $c\subset F$ such that $c(0) = I$, we can apply $\trn{Q}Q'$ to $c$ to get $c'$, then for any point $p \in \PD n$ we bound $|b_{c}(p) - b_{c'}(p)|$ as a function of $\theta$.  

Proving this theorem is quite technical. We first prove a Lipschitz bound in $\PD 2$, where the space of directions is a circle (as in the left part of Figure \ref{fig:horosphere}). After providing a bound in $\PD 2$ we decompose the distance between two directions in  $\SO n$ into ${n \choose 2}$ angles defined by $2 \times 2$ submatrices in an $n \times n$ matrix.  In this setting it is possible to apply the $\PD 2$ Lipschitz bound ${ n \choose 2}$ times to get the full bound.  The proof for $\PD 2$ is presented in Section~\ref{sec:lipschitz-bound-pd2}, and the generalization to $\PD n$ is presented in Section~\ref{sec:generalizing-pd-n}. Finally, we combine these results in an algorithm in Section~\ref{sec:slow-algorithm}.

The following lemma describes how geodesics (and horofunctions) are transformed by a rotation. 

\begin{lemma}
\label{lem:horo-rotate}
  For a point $p\in\PD{n}$, a rotation matrix $Q$, geodesics $c(t)=e^{tA}$ and $c'(t)=e^{t\actx{Q}{A}}$, 
  \[
    b_{c'}(p) = b_c(\actxt{Q}{p}).
  \]
\end{lemma}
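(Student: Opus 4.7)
The plan is to exploit the fact that conjugation by an orthogonal matrix is an isometry of $\PD n$. I would proceed in three short steps.

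First I would reduce the rotated geodesic $c'$ to a conjugate of $c$. Since $\trn{Q}Q=I$, we have $(\actx{Q}{A})^k = QA^k\trn{Q}$ for every $k\geq 0$, and summing the power series of the matrix exponential gives
\[
c'(t)=e^{t\actx{Q}{A}}=Qe^{tA}\trn{Q}=\actx{Q}{c(t)}.
\]
So the geodesic $c'$ is literally the geodesic $c$ conjugated at every time $t$ by the rotation $Q$.

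Second I would apply the isometry. The map $p\mapsto \actxt{Q}{p}$ is a $\PD n$-isometry (the same ambient fact that was used in Lemma \ref{lem:shifting-origin} to move the base point, specialized to $g=\trn{Q}$). Applying it to both arguments of the distance,
\[
d(p,c'(t))=d(p,\actx{Q}{c(t)})=d(\actxt{Q}{p},\actxt{Q}{\actx{Q}{c(t)}})=d(\actxt{Q}{p},c(t)),
\]
the last step again by $\trn{Q}Q=QQ^{-1}=I$.

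Third I would plug into the definition of the Busemann function and take the limit:
\[
b_{c'}(p)=\lim_{t\to\infty}\bigl(d(p,c'(t))-t\bigr)=\lim_{t\to\infty}\bigl(d(\actxt{Q}{p},c(t))-t\bigr)=b_c(\actxt{Q}{p}).
\]
There is no real obstacle here; the only thing to be careful about is the bookkeeping of transposes, and the only two facts used are that orthogonal conjugation commutes with the matrix exponential and that the $\PD n$ metric is invariant under the congruence action of $\GL n$ (hence of $\SO n$).
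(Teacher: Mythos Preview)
Your proof is correct, and it is genuinely different from the paper's argument. The paper does not go back to the limit definition of $b_c$; instead it works with the explicit closed form $b_c(p)=-\tr(A\log(\pi_F(p)))$ and pushes the rotation through the horospherical projection $\pi_F$ and the trace, using identities like $\pi_{F'}(p)=\actx{Q}{\pi_F(\actxt{Q}{p})}$ and $\tr(\actx{Q}{\cdot})=\tr(\cdot)$. Your route---observe $c'(t)=\actx{Q}{c(t)}$, apply the isometry $x\mapsto\actxt{Q}{x}$ inside $d(\cdot,\cdot)$, and pass to the limit---is shorter and uses strictly less machinery (no decomposition $p=\nu f\trn{\nu}$, no flat $F$, no explicit Busemann formula). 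The paper's version, on the other hand, is computed in exactly the $\pi_F$-and-trace language that the subsequent decomposition lemmas (e.g.\ the block-rotation Lemma~\ref{lem:Rotation-dc}) rely on, so it doubles as a warm-up for that calculation; your argument gives the result cleanly but does not rehearse that machinery.
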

\begin{proof}
  If $A'=\actx{Q}{A}$ is the tangent vector of $c'$, and $F'$ is the flat containing $c'$,
  \begin{align*}
    b_{c'}(p) 
    &= -\tr(A'\log(\pi_{F'}(p)))
    = -\tr((\actx{Q}{A})\log(\actx{(\actx{Q}{\inv{\nu}})}{p}))\\
    &= -\tr(A\actxt{Q}{\log(\actx{Q}{\actxi{\nu}{(\actxt{Q}{p})}})})
    = -\tr(A\log(\actxi{\nu}{(\actxt{Q}{p})})) \\
    &= b_c(\actxt{Q}{p}).
  \end{align*}
\end{proof}

In particular, this allows us to pick a flat where computation of $b_c$ is convenient, and rotate the point set by $Q$ to compute $b_c$ instead of attempting computation of $b_{c'}$ directly, which may be more cumbersome; we will utilize this idea later.

\subsection{Projection to $k$-flat}
\label{sec:projection-k-flat}

For the first part of our proof, we establish an equivalence between horospheres and halfspaces. That is, after we compute the projection of our point set, we can say that the point set lies inside a horoball $\hb{b_c}{r}$ if and only if its projection lies inside a halfspace $H_r$ of $F$ (recall that $F$ is isometric to a Euclidean space under $\log$).

\begin{lemma}   \label{lem:horo-hyper-dual}
  For any horoball $\hb{b_c}{r}$, there is a halfspace $H_r\subset\log(F)\subset\Sym{n}$ such that $\log(\pi_F(\hb{b_c}{r})) = H_r$.
\end{lemma}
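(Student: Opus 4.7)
The strategy I would follow is to exploit the closed-form Busemann expression
\[
b_c(p) = -\tr(A \log(\pi_F(p)))
\]
derived in the preliminaries for $c(t) = e^{tA}$. The crucial observation is that this expression is \emph{linear} in the variable $L := \log(\pi_F(p)) \in \log(F)$. Since a halfspace in the Euclidean space $\log(F)$ is, by definition, a sublevel set of a linear functional, the horoball condition $b_c(p) \le r$ should translate directly into a halfspace condition on $L$, and the lemma should reduce to unwinding this observation.

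Concretely, I would first invoke the rotation device from the preliminaries (together with Lemma \ref{lem:horo-rotate}) to reduce to the case in which $A$ is sorted-diagonal, so that the explicit Busemann formula applies and $A$ itself lies in $\log(F)$. I would then set
\[
H_r := \{\, L \in \log(F) \,:\, -\tr(AL) \le r \,\},
\]
which is manifestly an affine halfspace of $\log(F)$ with normal $-A$ under the trace inner product restricted to $\log(F)$.

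For the inclusion $\log(\pi_F(\hb{b_c}{r})) \subseteq H_r$: if $p \in \hb{b_c}{r}$ then $-\tr(A\log(\pi_F(p))) = b_c(p) \le r$, so $L := \log(\pi_F(p)) \in H_r$. For the reverse inclusion, given $L \in H_r$ I would exhibit a witness in the horoball whose horospherical projection log equals $L$: the element $f := \exp(L)$ lies in $F \subset \PD n$, and since $f$ already lies in $F$ its unique $N_\xi \times F$ decomposition is the trivial one $f = I \cdot f \cdot \trn{I}$. Hence $\pi_F(f) = f$, so $\log(\pi_F(f)) = L$, and $b_c(f) = -\tr(AL) \le r$, placing $f \in \hb{b_c}{r}$.

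I do not expect any real obstacle here: once the linear formula for $b_c$ is in hand and the sorted-diagonal reduction is granted, the lemma is essentially a tautology. The only mild subtlety is that ``halfspace'' is being used in the sense of an affine halfspace of the subspace $\log(F) \subset \Sym n$ rather than of $\realn n$, but since $\log(F)$ is a genuine Euclidean vector space under the trace inner product (as noted in the preliminaries), the notion is unambiguous.
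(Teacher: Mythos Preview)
Your proposal is correct and follows essentially the same approach as the paper: both use the closed-form $b_c(p)=-\tr(A\log(\pi_F(p)))$ and observe that the sublevel set $-\tr(AL)\le r$ is a halfspace in the Euclidean space $\log(F)$ under the trace inner product. If anything, your argument is slightly more complete, since you explicitly supply the witness $f=\exp(L)\in F$ to establish the reverse inclusion, which the paper leaves implicit.
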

\begin{proof}
  If $b_c(p)\leq r$, $p\in\PD n$, and $c(t)=e^{tA}$, then $-\tr(A\log(\pi_F(p)))\leq r$. Since $\pi_F(p)$ is positive-definite, $\log(\pi_F(p))$ is symmetric. But $\tr((\cdot)(\cdot))$ defines an inner product on the Euclidean space of symmetric $n\times n$ matrices. Then the set of all $Y$ such that $-\tr(AY)\leq r$ defines a halfspace whose boundary is perpendicular to $A$.
\end{proof}

This gives us a means to compute horoballs by using $\pi_F$ to project our point set onto $F$, and leverage a familiar Euclidean environment.

\subsection{A Lipschitz bound in $\PD 2$}
\label{sec:lipschitz-bound-pd2}

\subsubsection{Rotations in $\PD 2$}
\label{sec:rotations-pd-2}

We start with some technical lemmas that describe the locus of rotating points in $\PD 2$.

\begin{lemma}
\label{lem:rotate}
  Given a rotation matrix $Q\in\SO{2}$ corresponding to an angle of $\theta/2$, $Q$ acts on a point $p\in\PD{2}$ via $\actx{Q}{p}$ as a rotation by $\theta$ about the (geodesic) axis $e^{tI}=e^tI$.
\end{lemma}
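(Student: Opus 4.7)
The plan is to verify that conjugation by $Q$ satisfies the three defining properties of a rotation by angle $\theta$ about the geodesic axis $\{e^t I\}$ in the Riemannian manifold $\PD{2}$: (i) it is an isometry, (ii) it fixes the axis pointwise, and (iii) its differential acts as rotation by $\theta$ on the tangent plane normal to the axis. Property (i) is standard, inherited from the general isometric action $g\cdot p = gpg^T$ of $GL(n)$ on $\PD{n}$. Property (ii) is immediate: for $p=e^tI$ we have $\actx{Q}{e^tI} = e^t Q Q^T = e^tI$ since $Q\in \SO{2}$.

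The main work is in (iii). Under the Frobenius inner product on $T_I\PD{2}=\Sym(2)$, the tangent direction to the axis is spanned by $I$, and its orthogonal complement is the plane $\Sym_0(2)$ of traceless symmetric matrices. I would parameterize this plane by $A(a,b) = a\sigma_1 + b\sigma_3$, where $\sigma_1 = \mattwo{1}{0}{0}{-1}$ and $\sigma_3 = \mattwo{0}{1}{1}{0}$. A direct matrix computation with $Q = \mattwo{\cos(\theta/2)}{-\sin(\theta/2)}{\sin(\theta/2)}{\cos(\theta/2)}$ gives
\[
\actx{Q}{\sigma_1} = \cos\theta\,\sigma_1 + \sin\theta\,\sigma_3,
\qquad
\actx{Q}{\sigma_3} = -\sin\theta\,\sigma_1 + \cos\theta\,\sigma_3,
\]
so $\actx{Q}{A(a,b)} = A(a\cos\theta - b\sin\theta,\, a\sin\theta + b\cos\theta)$, meaning the induced action on $(a,b)$-coordinates is rotation by $\theta$.

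The doubling $\theta/2 \mapsto \theta$ is the algebraic reflection of conjugation being quadratic in $Q$; it is equivalently the double cover arising because symmetric matrices determine their eigenvectors only up to sign. To conclude, observe that conjugation preserves $\det$, so each horizontal cross-section $\{\log\det p = \mathrm{const}\}$ (i.e., each Poincar\'e disk in the cylindrical embedding of Figure~\ref{fig:horosphere}) is mapped to itself, and the axis $\{e^tI\}$ passes through the center of every such disk. Since the conjugation is an isometry fixing the axis pointwise and agreeing with the claimed rotation on $T_I$, it must equal the rotation by $\theta$ about the axis on every cross-section.

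The main obstacle is the factor-of-two bookkeeping in the differential computation; once that is in hand, the rest of the argument is a routine appeal to the cylindrical structure of $\PD{2}$ described in the preliminaries.
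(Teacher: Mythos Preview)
Your proof is correct and takes essentially the same approach as the paper: both verify that the axis $e^tI$ is fixed, that conjugation is an isometry, and then carry out the explicit matrix computation showing that conjugation by $Q(\theta/2)$ rotates the coordinates $\bigl(\tfrac{u-v}{2},\,w\bigr)$---equivalently, your basis $(\sigma_1,\sigma_3)$ of traceless symmetric matrices---by the doubled angle $\theta$. The only cosmetic difference is that the paper performs this computation directly on a generic $p=\mattwosym{u}{v}{w}$, whereas you compute the differential on $T_I$ and then invoke uniqueness of isometries; since conjugation acts linearly on $\Sym 2$ and $p=\tfrac{u+v}{2}I+\tfrac{u-v}{2}\sigma_1+w\sigma_3$, these are literally the same calculation.
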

\begin{proof}
  If $p = e^t I$, $t\in\reals$, then 
  $
    \actx{Q}{p}=e^t\actx{Q}{I}=e^t I,
  $ 
  so $e^t I$ is invariant under the action of $Q$. Any action $\actx{G}{p}$ where $G\in\GL{n}$ is an isometry on $\PD{n}$, so the distance from $p$ to the axis $e^t I$ remains fixed~\cite[II.10]{Bridson2009}. 
  Computing $\actx{Q}{p}$ as a function of $\theta$, we get:
  \[
    \begin{pmatrix}
      \cos\frac{\theta}{2} & -\sin\frac{\theta}{2} \\
      \sin\frac{\theta}{2} & \cos\frac{\theta}{2}
    \end{pmatrix}
    \mattwosym{u}{v}{w}
    \begin{pmatrix}
      \cos\frac{\theta}{2} & \sin\frac{\theta}{2} \\
      -\sin\frac{\theta}{2} & \cos\frac{\theta}{2} 
    \end{pmatrix}
    = 
    \begin{pmatrix}
      \frac{u+v}{2} + \frac{u-v}{2}\cos\theta - w\sin\theta & 
      \frac{u-v}{2}\sin\theta + w\cos\theta \\
      \frac{u-v}{2}\sin\theta + w\cos\theta & 
      \frac{u+v}{2} - \frac{u-v}{2}\cos\theta + w\sin\theta \\
    \end{pmatrix},
  \]
  which is $2\pi$-periodic. (In fact, it is easy to see a rotation in the ``coordinates" $\frac{u-v}{2}$ and $w$.)
\end{proof}

By Lemma~\ref{lem:rotate}, we know that as we apply a rotation to $p$, it moves in a circle. Because any rotation $Q$ has determinant $1$, $\det(\actx{Q}{p})=\det(p)$. This leads to the following corollary:

\begin{corollary}
In $\PD 2$, the radius of the circle that $p$ travels on is $\GA(p)$
. 
Such a circle lies entirely within a submanifold of constant determinant. 
\label{cor:circle-const-det}
\end{corollary}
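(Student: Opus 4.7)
The plan is to split the corollary into its two assertions and address them in sequence. The first (constant-determinant) claim is essentially immediate: every $Q\in\SO 2$ satisfies $\det Q=1$, so
\[
  \det(\actx{Q}{p}) = \det(Q)\det(p)\det(\trn Q) = \det(p),
\]
which places the whole orbit of $p$ under the rotation action in the submanifold $\{x\in\PD 2 : \det x = \det p\}$.

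For the radius, the starting point is Lemma~\ref{lem:rotate}: the action of $Q$ is an isometric rotation about the geodesic axis $\{e^{tI}=e^t I : t\in\reals\}$, so the distance from $p$ to this axis is preserved along the orbit. Thus the orbit is a circle whose center must be the nearest axis point to $p$, and whose radius is that minimum distance. To identify the nearest axis point I would exploit that $e^t I$ commutes with every element of $\PD 2$, so
\[
  d(p,e^t I) = \symnorm{\log(p^{-\frac12}e^t I p^{-\frac12})} = \symnorm{tI - \log p}.
\]
A one-variable minimization (set the derivative in $t$ to zero) gives the minimizer $t^\star = \tfrac12(\log\la_1+\log\la_2) = \log\sqrt{\det p}$, where $\la_1,\la_2$ are the eigenvalues of $p$. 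Hence the closest axis point is $\sqrt{\det p}\cdot I$, and the minimum value is by definition $\GA(p) = d(\sqrt{\det p}\cdot I,\,p)$.

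The last small piece is to verify that this nearest axis point is common to every element of the orbit, so that it is genuinely the \emph{center} of the circle rather than merely some axis point equidistant from $p$. This is delivered for free by the first part: since $\det$ is constant on the orbit, the above calculation applied to $\actx{Q}{p}$ again yields $\sqrt{\det p}\cdot I$ as the closest axis point. Therefore every point on the orbit lies at distance $\GA(p)$ from this single center.

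I do not foresee a real obstacle here; the argument reduces to determinant invariance plus the explicit distance-to-axis formula on the commutative family $\{e^t I\}$. The only place to be a bit careful is the last observation — without determinant preservation one would only know that each orbit point is at distance $\GA(p)$ from \emph{its own} closest axis point, which is weaker than having a single center.
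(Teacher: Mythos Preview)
Your argument is correct and follows the same line as the paper: determinant invariance via $\det Q=1$ for the second claim, and Lemma~\ref{lem:rotate} (isometric rotation about the axis $\{e^t I\}$) for the first. The paper simply asserts the corollary from these two facts without writing out the explicit minimization you perform; your computation that the closest axis point is $\sqrt{\det p}\,I$ (hence the radius equals $\GA(p)$ by definition) and your observation that determinant invariance forces a \emph{common} center are exactly the details the paper leaves implicit.
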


In fact, any submanifold $\PD{2}_r$ of points with determinant equal to some $r\in\reals^+$ is isometric to any other such submanifold $\PD{2}_s$ for $s\in\reals^+$. This is seen very easily by considering the distance function $\tr(\log(\inv{p}q))$ --- the determinants of $p$ and $q$ will cancel. 

We pick a natural representative of these submanifolds, $\PD{2}_1$. This submanifold forms a complete metric space of its own that has special structure:

\begin{lemma}
\label{lem:constant-curve}
  $\PD{2}_1$ has constant sectional curvature $-\frac{1}{2}$.
\end{lemma}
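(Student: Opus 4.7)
The plan is to reduce the curvature computation to a classical Gaussian-curvature formula in isothermal coordinates, by first finding a nice global chart in which the metric on $\PD{2}_1$ becomes a scalar multiple of the upper half-plane metric.

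First, I would use the horospherical decomposition from Section~\ref{sec:decomp-pd} applied to $\PD{2}$. The computation in the preliminaries of the $\PD 2$ case shows that any $p\in\PD{2}_1$ factors uniquely as $p = \nu a \nu^T$, with $\nu = \begin{pmatrix} 1 & x \\ 0 & 1 \end{pmatrix}\in N_\xi$ and $a = \begin{pmatrix} y & 0 \\ 0 & 1/y \end{pmatrix}\in F$, giving a global chart $(x,y)\in\reals\times\reals^+$ in which $p(x,y) = \begin{pmatrix} y + x^2/y & x/y \\ x/y & 1/y \end{pmatrix}$ and $p^{-1}(x,y) = \begin{pmatrix} 1/y & -x/y \\ -x/y & y + x^2/y \end{pmatrix}$. (One checks that $\det p \equiv 1$, so this really parameterizes all of $\PD{2}_1$.)

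Next I would pull back the ambient metric $g_p(A,B) = \tr(p^{-1}A p^{-1}B)$ through this chart. A direct calculation yields $p^{-1}\partial_x p = \begin{pmatrix} x/y^2 & 1/y^2 \\ 1-x^2/y^2 & -x/y^2 \end{pmatrix}$ and $p^{-1}\partial_y p = \begin{pmatrix} 1/y & 0 \\ -2x/y & -1/y \end{pmatrix}$. Squaring and tracing, the cross-terms cancel and one reads off $g_{xx} = \tr((p^{-1}\partial_x p)^2) = 2/y^2$, $g_{yy} = \tr((p^{-1}\partial_y p)^2) = 2/y^2$, and $g_{xy} = 0$. Hence the induced metric is $ds^2 = \frac{2}{y^2}(dx^2 + dy^2)$, exactly twice the standard hyperbolic upper half-plane metric.

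Finally, I would invoke the classical Gaussian-curvature formula for a conformally flat metric $ds^2 = f(x,y)(dx^2+dy^2)$, namely $K = -\frac{1}{2f}\Delta\log f$ with $\Delta = \partial_x^2 + \partial_y^2$. With $f = 2/y^2$ we get $\log f = \log 2 - 2\log y$, so $\Delta \log f = 2/y^2$, and therefore $K = -\frac{y^2}{4}\cdot\frac{2}{y^2} = -\frac{1}{2}$. Since this value is independent of $(x,y)$, the sectional curvature is constant (and this is also consistent with the fact that $SL(2,\reals)$ acts isometrically and transitively on $\PD{2}_1$ via $g\cdot p = gpg^T$, so the value at one point determines it everywhere).

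The main obstacle is Step 2: the matrix products $p^{-1}\partial_i p$ and their traces are routine but easy to miscompute. Conceptually, the argument also clarifies why the constant $-\tfrac12$, rather than the more familiar $-1$, arises: it is precisely the factor-of-two rescaling of the standard upper half-plane metric, which halves the curvature.
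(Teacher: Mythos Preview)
Your proof is correct and the computations check out. It is, however, a genuinely different route from the paper's. The paper embeds $\PD{2}_1$ in the \emph{hyperboloid} model: writing $p=\mattwosym{x}{y}{w}$ with $xy-w^2=1$ and substituting $u=(x+y)/2$, $v=(x-y)/2$ gives $u^2-v^2-w^2=1$; it then compares the distance functions $d_{\PD 2}(p,I)$ and $d_{\hboln 2}(p,I)$ directly (using transitivity of the isometry group to reduce to distances from $I$) and finds $d_{\PD 2}=\sqrt{2}\,d_{\hboln 2}$, from which $\kappa=-1/2$ follows via the scaling law for curvature. Your approach instead uses the Iwasawa/horospherical coordinates to realize $\PD{2}_1$ as the \emph{upper half-plane} with the rescaled metric $\tfrac{2}{y^2}(dx^2+dy^2)$, and then computes the Gaussian curvature from the isothermal formula. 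The paper's argument is shorter and more geometric (no metric-tensor computation, just eigenvalue algebra on the distance), while yours is more intrinsically Riemannian and has the advantage of reusing the horospherical decomposition already developed in Section~\ref{sec:decomp-pd}; it also makes the origin of the factor $-\tfrac12$ (a factor-of-two rescaling of the standard hyperbolic metric) more transparent.
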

\begin{proof}
  Let $p\in\PD{2}_1$. Then if $p=\mattwosym{x}{y}{w}$, $\det(p) = xy-w^2 = 1$. Let $u=\frac{x+y}{2}$ and $v=\frac{x-y}{2}$ so that $x=u+v$ and $y=u-v$. Then $\det(p) = u^2-v^2-w^2=1$. This describes a hyperboloid of two sheets, and restricting $u>0$, is a model for hyperbolic space $\hboln{2}$. 
  
  To analyze the metrics between the two spaces, we may consider our other point to be the identity matrix, since in $\PD{2}_1$, $d(p,q)=d(q^{-1/2}pq^{-1/2},I)$. The equivalent point on the hyperboloid to $I$ is $(u,v,w)=(1,0,0)$. The distance between the two points in the hyperbolic metric is 
  \[
    d_{\hboln{2}}(p,I)
    = \inv{\cosh}(u_1u_2-v_1v_2-w_1w_2)
    = \inv{\cosh}\left(\frac{x+y}{2}\right)
    = \ln \left(\frac{x+y}{2}+\sqrt{\left(\frac{x+y}{2}\right)^{2}-1} \right).
  \]
  And in the metric of $\PD{n}$:
  \[
    d_{\PD{2}}(p,I)
    = \sqrt{\tr(\log^2(p))}
    = \sqrt{\ln^2\la_1+\ln^2\la_2}
    = \sqrt{2}\ln\la_1
    = \sqrt{2}\ln\left(\frac{x+y}{2}+\sqrt{\left(\frac{x+y}{2}\right)^{2}-1} \right).
  \]
  Since this is a constant multiple of the hyperbolic metric, $\PD{2}_1$ is a complete metric space of constant negative sectional curvature. We can find the curvature $\ka$ by solving $1/\sqrt{-\ka}=\sqrt{2}$ to get $\ka = -1/2$~\cite[I.2]{Bridson2009}.
\end{proof}

\subsubsection{Bounding $\|\nabla b_c\|$}
\label{sec:bounding-nabla-b_c}

To bound the error incurred by discretizing the space of directions, we need to understand the behavior of $b_c$ as a function of a rotation $Q$. We show that the derivative of a geodesic is constant on $\PD n$.

\begin{lemma}
  For a geodesic ray $c(t)=e^{tA}$, $\|A\|=1$, then $\|\nabla b_c\| = 1$ at any point $p\in\PD{n}$.
  \label{lem:nabla}
\end{lemma}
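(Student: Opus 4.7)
The plan is to prove the two inequalities $\|\nabla b_c(p)\| \leq 1$ and $\|\nabla b_c(p)\| \geq 1$ separately, exploiting general properties of Busemann functions in complete $\CAT{0}$ (and in fact Hadamard) spaces rather than grinding through the explicit formula $b_c(p) = -\tr(A\log(\pi_F(p)))$. Recall that $\PD{n}$ is a Hadamard manifold, so $b_c$ is a smooth convex function and its Riemannian gradient is well-defined.

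First I would establish the upper bound $\|\nabla b_c(p)\| \leq 1$ by showing that any Busemann function is $1$-Lipschitz. For any $p,q\in\PD n$ and any $t>0$, the triangle inequality gives $|d(p,c(t)) - d(q,c(t))| \leq d(p,q)$; subtracting $t$ on both sides inside the absolute value and letting $t\to\infty$ yields $|b_c(p)-b_c(q)|\leq d(p,q)$. Being $1$-Lipschitz, $b_c$ has gradient of norm at most $1$ wherever it is differentiable.

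Next I would establish the matching lower bound by exhibiting a unit tangent vector at $p$ along which $b_c$ has directional derivative $-1$. Since $\PD n$ is a complete $\CAT 0$ space, there is a unique unit-speed geodesic ray $c_p$ issuing from $p$ with $c_p\sim c$ (as cited from Bridson--Haefliger II.8). A direct computation from the definition shows $b_{c_p}(c_p(t)) = \lim_{s\to\infty}(d(c_p(t),c_p(s))-s) = \lim_{s\to\infty}((s-t)-s) = -t$. Since Busemann functions attached to asymptotic rays agree up to an additive constant, $b_c - b_{c_p}$ is a constant $K$, so $b_c(c_p(t)) = -t + K$ and $b_c(p)=K$; thus $b_c(c_p(t)) = b_c(p) - t$. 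Differentiating at $t=0$ gives $\inner{\nabla b_c(p)}{c_p'(0)}_p = -1$, and since $\|c_p'(0)\|_p = 1$, Cauchy--Schwarz forces $\|\nabla b_c(p)\|\geq 1$. Combining the two bounds gives the lemma.

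The one delicate step is the identity $b_c(c_p(t)) = b_c(p) - t$: one needs to invoke both the uniqueness of the asymptotic ray from $p$ and the fact that Busemann functions of asymptotic rays differ only by a constant, rather than try to compute the limit directly along $c$, which is harder because $c$ and $c_p$ live in different flats in general. Once this linear decrease along the asymptotic ray is in hand, the rest is routine.
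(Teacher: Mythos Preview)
Your proof is correct. The paper takes a related but slightly different route: rather than sandwiching $\|\nabla b_c\|$ between two inequalities, it identifies the gradient vector directly. It defines $\gamma_p(t)$ as the nearest-point projection of $p$ onto the horoball $\hb{b_c}{b_c(p)-t}$, cites Bridson--Haefliger~II.8 to conclude that $\gamma_p$ is a unit-speed geodesic ray, and then argues from the perpendicularity of projection to the horosphere that $\nabla b_c = -\gamma_p'$ along this ray; evaluating at the base point (where $\gamma_p = c$) gives $\|\nabla b_c\| = \|A\| = 1$. The ray $\gamma_p$ is in fact the same object as your asymptotic ray $c_p$, so the two arguments share their geometric core. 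Your packaging is arguably tidier --- the $1$-Lipschitz upper bound and the linear decrease $b_c(c_p(t)) = b_c(p) - t$ are each one-line facts --- while the paper's version has the minor bonus of actually naming the gradient vector ($\nabla b_c(p) = -c_p'(0)$), not just computing its norm.
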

\begin{proof}
  Let $u=b_c(p)$. If we define the map $\gamma_p$ to map $p$ to its projection onto any horoball $\hb{b_c}{u-t}$, where $t>0$, then $\gamma_p$ is a geodesic ray~\cite[II.8]{Bridson2009}. Since any $\gamma_p(t)$ is the projection onto the horoball $\hb{b_c}{u-t}$, the geodesic segment $[p,\gamma_p(t)]$ is perpendicular to $\hb{b_c}{u-t}$ at $\gamma_p(t)$, and therefore the tangent vector of $\gamma_p$ points directly opposite to $\nabla b_c$ at $\gamma_p(t)$.
  Because $\gamma_p$ intersects $\hb{b_c}{u-t}$ at $t$, $b_c(\gamma_p(t))$ must change at a rate opposite to $\gamma_p(t)$, along $\gamma_p$, and since $\nabla b_c$ points in the opposite direction as $\gamma_p'(t)$ at $\gamma_p(t)$, $\nabla b_c = -\gamma_p'(t)$.
  
  Also, since $d(p,\hb{b_c}{u-t})=t$ for any $p\in\hb{b_c}{u}$, for any $u$, and $\|\gamma_p'\|$ is constant along $\gamma_p$, $\|\nabla b_c\|$ is the same anywhere in $\PD{n}$.  Since $c(t)$ is the projection of $c(0)$ onto $\hb{b_c}{-t}$ by construction, and geodesics are unique in $\PD{n}$, $\|\nabla b_c\|=\| A \| = 1$.

\end{proof}

\subsubsection{A Lipschitz condition on Busemann functions in $\PD 2$}
\label{sec:lipsch-cond-busem}

We are now ready to prove the main Lipschitz result in $\PD 2$. We start with a more specific bound that depends on the geodesic anisotropy of a point:

\begin{lemma}
\label{lem:Lipschitz}
  Given a point $p\in\PD 2$, a rotation matrix $Q$ corresponding to an angle of $\theta/2$, geodesics $c(t) = e^{tA}$ and $c'(t) = e^{tQ A\trn Q}$, 
  \[
    |b_c(p) - b_{c'}(p)| 
    \leq |\theta| \cdot \sqrt{2}\sinh\left(\frac{\GA(p)}{\sqrt{2}}\right).
  \]
\end{lemma}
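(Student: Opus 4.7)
The plan is to reduce the problem to bounding the difference of $b_c$ at two points related by conjugation, then bound that difference by the length of an explicit path connecting them and the uniform bound on $\|\nabla b_c\|$.

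First, I would apply Lemma~\ref{lem:horo-rotate} to rewrite $b_{c'}(p) = b_c(\actxt{Q}{p})$. So the task becomes bounding $|b_c(p) - b_c(\actxt{Q}{p})|$, that is, comparing values of the same horofunction at two distinct points of $\PD 2$.

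Next, I would connect $p$ to $\actxt{Q}{p}$ by the one-parameter path $\gamma(s) = \actxt{Q(s)}{p}$, where $Q(s)$ is the rotation by $s$, so that $\gamma(0)=p$ and $\gamma(\theta/2) = \actxt{Q}{p}$. By Lemma~\ref{lem:rotate} this path is a circular orbit under the action of $\SO 2$, traversing angle $|\theta|$, and by Corollary~\ref{cor:circle-const-det} it lies inside the constant-determinant submanifold $\PD{2}_{\det(p)}$ as a circle of radius $\GA(p)$.

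The key step is computing the $\PD{2}$-arc length of $\gamma$. By Lemma~\ref{lem:constant-curve}, $\PD{2}_1$ (and hence every $\PD{2}_r$, by isometry) has constant sectional curvature $-1/2$. In a space of constant curvature $-1/2$, a circle of radius $r$ has circumference $2\pi\sqrt{2}\sinh(r/\sqrt{2})$, so rotating through an angle $|\theta|$ along a circle of radius $\GA(p)$ yields arc length $|\theta|\sqrt{2}\sinh(\GA(p)/\sqrt{2})$.

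Finally, I would use Lemma~\ref{lem:nabla}, which gives $\|\nabla b_c\| = 1$ everywhere on $\PD{2}$, together with the Cauchy--Schwarz inequality along $\gamma$:
\[
  |b_c(p) - b_c(\actxt{Q}{p})|
  \;\leq\; \int_0^{|\theta|/2} \|\nabla b_c(\gamma(s))\| \cdot \|\gamma'(s)\|\, ds
  \;\leq\; \mathrm{length}(\gamma)
  \;=\; |\theta|\sqrt{2}\sinh\!\left(\GA(p)/\sqrt{2}\right).
\]
The main obstacle is the arc-length calculation: to be rigorous I need the $\PD 2$-metric restricted to $\PD{2}_{\det(p)}$ to behave as a rescaled hyperbolic metric with curvature $-1/2$, which is exactly what Lemma~\ref{lem:constant-curve} provides (extended from $\PD{2}_1$ to all constant-determinant submanifolds by isometry). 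Everything else is a routine application of the previously established lemmas.
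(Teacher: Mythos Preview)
Your proposal is correct and follows essentially the same route as the paper's proof: reduce via Lemma~\ref{lem:horo-rotate} to comparing $b_c$ at $p$ and at $\actxt{Q}{p}$, connect them by the rotation orbit $\gamma$, use Corollary~\ref{cor:circle-const-det} and Lemma~\ref{lem:constant-curve} to compute the speed/arclength of $\gamma$ in the constant-curvature slice, and finish with $\|\nabla b_c\|=1$ from Lemma~\ref{lem:nabla}. The only cosmetic difference is that the paper parameterizes $\gamma$ directly by the effective rotation angle $\theta$ (so $\|\gamma'(\theta)\|=\sqrt{2}\sinh(\GA(p)/\sqrt{2})$), whereas you parameterize by the matrix angle $s\in[0,|\theta|/2]$ and recover the same total length $|\theta|\sqrt{2}\sinh(\GA(p)/\sqrt{2})$ from the geometry of the circle.
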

\begin{proof}
  The derivative of a function $f$ along a curve $\gamma(t)$ has the form $\inner{\nabla f|_{\gamma(t)}}{\gamma'(t)}$, and has greatest magnitude when the tangent vector $\gamma'(t)$ to the curve and the gradient $\nabla f|_{\gamma(t)}$ are parallel. When this happens, the derivative reaches its maximum at $\|\nabla f|_{\gamma(t)}\|\cdot\|\gamma'(t)\|$. 
  
  Since $\|\nabla b_c\|=1$ anywhere by Lemma~\ref{lem:nabla}, the derivative of $b_c$ along $\gamma$ at $\gamma(t)$ is bounded by $\|\gamma'(t)\|$. We are interested in the case where $\gamma(\theta)$ is the circle in $\PD{2}$ defined by tracing $\actx{Q(\theta/2)}{p}$ for all $-\pi<\theta\leq\pi$. By Corollary~\ref{cor:circle-const-det}, we know that this circle has radius $\GA(p)$ and lies entirely within a submanifold of constant determinant, which by Lemma~\ref{lem:constant-curve} also has constant curvature $\kappa = -1/2$.

This implies that \[
    \|\gamma'(\theta)\|
    = \frac{1}{\sqrt{-\kappa}}\sinh(\sqrt{-\kappa}\,r) 
    = \sqrt{2}\sinh\left(\frac{\GA(p)}{\sqrt{2}}\right).
  \]
  for any value of $\theta\in(-\pi,\pi]$~\cite[I.6]{Bridson2009}.
  Then \[
    |b_c(p) - b_{c'}(p)| 
    = |b_{c}(p) - b_{c}(\actxt{Q}{p})| 
    \leq |\theta| \cdot \sqrt{2}\sinh\left(\frac{\GA(p)}{\sqrt{2}}\right).
  \]
\end{proof}

We can now state our main Lipschitz result in $\PD 2$.

\begin{theorem}
\label{thm:diam-bound}
For $Q\in\SO{2}$ corresponding to $\theta/2$, and let $\gamma(t)=e^{tA}$ and $\gamma'(t)=e^{t\actxt{Q}{A}}$. Then for any $p \in X$
  \[
    |b_\gamma(p) - b_{\gamma'}(p)| 
    \leq |\theta| \cdot \sqrt{2}\sinh\left(\frac{d_X}{\sqrt{2}}\right).
  \]
\end{theorem}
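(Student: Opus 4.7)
The theorem upgrades Lemma~\ref{lem:Lipschitz} from a pointwise bound depending on $\GA(p)$ to a uniform bound over all of $X$ in terms of $d_X$, so the plan is essentially to invoke Lemma~\ref{lem:Lipschitz} and dominate the right-hand side. The first step is to reconcile the two statements of the rotation action. Lemma~\ref{lem:Lipschitz} is stated with $c'(t) = e^{tQA\trn Q}$ while the theorem uses $\gamma'(t) = e^{t\actxt{Q}{A}} = e^{t\trn{Q}AQ}$. Since $\trn{Q} = \inv{Q}$ is itself a rotation in $\SO 2$ corresponding to the angle $-\theta/2$, I would apply Lemma~\ref{lem:Lipschitz} with $\trn{Q}$ in place of $Q$; the $|\theta|$ in the bound is insensitive to the sign change, so we immediately obtain
\[
  |b_\gamma(p) - b_{\gamma'}(p)| \leq |\theta| \cdot \sqrt{2}\sinh\left(\frac{\GA(p)}{\sqrt{2}}\right).
\]

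The second step is to replace $\GA(p)$ with $d_X$. Recall that the geodesic anisotropy is defined by $\GA(p) = d(\sqrt[n]{\det(p)}I, p)$, and, as noted in the paragraph preceding Lemma~\ref{lem:shifting-origin}, $\sqrt[n]{\det(p)}I$ is the nearest point to $p$ on the determinant-one submanifold, so $\GA(p) \leq d(I,p) \leq d_X$ for every $p \in X$. Since $\sinh$ is monotonically increasing on $[0,\infty)$, we then have $\sinh(\GA(p)/\sqrt{2}) \leq \sinh(d_X/\sqrt{2})$, and the bound follows.

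There is really no obstacle here; the work was done in Lemma~\ref{lem:Lipschitz} and in the preceding discussion of $\GA$ and $d_X$. The only subtlety worth spelling out is the transpose-versus-non-transpose group action, which accounts for the distinction between the hypothesis of the lemma and that of the theorem. I would write the proof as two short displayed inequalities chained together, citing Lemma~\ref{lem:Lipschitz} and the definition of $d_X$, with a one-line remark on the monotonicity of $\sinh$.
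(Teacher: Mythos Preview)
Your proposal is correct and matches the paper's approach: the paper states Theorem~\ref{thm:diam-bound} immediately after Lemma~\ref{lem:Lipschitz} with no separate proof, treating it as a direct consequence via the inequality $\GA(p)\le d(I,p)\le d_X$ already recorded in the text. Your observation about the transpose discrepancy between $\actx{Q}{A}$ in the lemma and $\actxt{Q}{A}$ in the theorem is a nice catch that the paper glosses over; one small slip is that $\sqrt[n]{\det(p)}I$ is the nearest point to $p$ on the axis $e^{tI}$, not on the determinant-one submanifold, but this does not affect the argument since $\GA(p)\le d(I,p)$ is asserted explicitly in the paper.
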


\subsection{Generalizing to $\PD n$}
\label{sec:generalizing-pd-n}

  Now to generalize to $\PD n$ we need to decompose the projection operation $\pi_F(\cdot)$ and the rotation matrix $Q$.  We can compute $\pi_F$ recursively, and it turns out that this fact helps us to break down the analysis of rotations. Since we can decompose any rotation into a series of $2 \times 2$ rotation matrices, decomposing the computation of $\pi_F$ in a similar manner lets us build a Lipschitz condition for $\PD n$.

\begin{lemma}
  
  Let $F\subset\PD{n}$ be the flat containing diagonal matrices, let $r+s=n$, and let $\pi_{F,r}$ and $\pi_{F,s}$ be the projection operations for $r\times r$ and $s\times s$ flats of diagonal matrices, respectively.  Then 
  \[
    \pi_F(p) = \mattwosym{\pi_{F,r}(h_r)}{\pi_{F,s}(p_s)}{},
  \]
  where $p_s$ is the lower right $s\times s$ block of $p$, and $h_r$ is the Schur complement of $p_s$.
\end{lemma}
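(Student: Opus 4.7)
The plan is to use the block LDU (Schur-complement) factorization to peel off the off-diagonal blocks of $p$, reducing the problem to projecting two smaller symmetric positive-definite matrices $h_r$ and $p_s$ separately, and then to glue the pieces back together and invoke the uniqueness of the $N_\xi \times F$ decomposition from Section~\ref{sec:decomp-pd}.

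First I would write $p$ in block form with lower-right block $p_s$, upper-left block $p_r$, and upper-right block $b$, and apply the standard block LDU identity
\[
p = \nu_0 \begin{pmatrix} h_r & 0 \\ 0 & p_s \end{pmatrix} \trn{\nu_0},
\qquad \nu_0 = \begin{pmatrix} I_r & b\,\inv{p_s} \\ 0 & I_s \end{pmatrix},
\]
where $h_r = p_r - b\,\inv{p_s}\trn{b}$ is the Schur complement of $p_s$. The transpose $\trn{\nu_0}$ on the right is correct precisely because $p_s$ (and hence $\inv{p_s}$) is symmetric, so $\trn{(b\,\inv{p_s})} = \inv{p_s}\trn{b}$. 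Being block upper-triangular with identity blocks on the diagonal, $\nu_0$ is unit upper-triangular in the ordinary entry-wise sense, so $\nu_0 \in N_\xi$. Next, since $h_r$ and $p_s$ are themselves symmetric positive-definite, I would recursively apply the horospherical decompositions in $\PD r$ and $\PD s$ to write $h_r = \nu_r f_r \trn{\nu_r}$ and $p_s = \nu_s f_s \trn{\nu_s}$ with $\nu_r,\nu_s$ unit upper-triangular and $f_r = \pi_{F,r}(h_r)$, $f_s = \pi_{F,s}(p_s)$ diagonal. Packaging these into $\nu_1 = \begin{pmatrix} \nu_r & 0 \\ 0 & \nu_s \end{pmatrix}$, which is block-diagonal with unit upper-triangular diagonal blocks, and $D = \begin{pmatrix} f_r & 0 \\ 0 & f_s \end{pmatrix}$, which is diagonal, gives $p = (\nu_0 \nu_1)\, D\, \trn{(\nu_0 \nu_1)}$.

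To finish, I would verify the two conditions needed to invoke uniqueness: $D$ is diagonal, hence lies in $F$, and $\nu := \nu_0 \nu_1$ is a product of two unit upper-triangular matrices, hence itself unit upper-triangular and therefore in $N_\xi$. Uniqueness of the $N_\xi \times F$ decomposition then forces $\pi_F(p) = D$, which is precisely the claimed block-diagonal formula. The only real obstacle is the bookkeeping in the first step: one must check that the Schur-complement factorization really has the required symmetric shape $\nu_0 M \trn{\nu_0}$ (a short calculation using symmetry of $p_s$) and that the composite $\nu_0 \nu_1$ does not accidentally leave $N_\xi$ (which it cannot, since both factors are entry-wise unit upper-triangular). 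Everything else reduces to routine block matrix manipulation and the already-quoted uniqueness result.
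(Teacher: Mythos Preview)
Your proposal is correct and follows essentially the same approach as the paper: both use the block LDU/Schur-complement factorization to strip off the off-diagonal blocks, then recurse on the two diagonal blocks, and observe that the accumulated transformation stays unit upper-triangular. Your version is in fact a bit more explicit than the paper's in invoking the uniqueness of the $N_\xi \times F$ decomposition to justify that the resulting diagonal matrix is indeed $\pi_F(p)$.
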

\begin{proof}
  $\pi_F$ is computed by decomposing $p$ into $\actxi{\nu}{f}$, where $f$ is diagonal (and positive-definite) and $\nu$ is upper-triangular, with ones on the diagonal. This can be done by computing the Schur complement of some lower right square block of $p$, and putting the complement in the upper right block of a new matrix, with the lower block in the other corner. The original matrix can be reconstructed by conjugating by an upper-triangular matrix of appropriate form:
  \[
    p=\mattwobsym{p_r}{p_s}{a} = 
    \begin{pmatrix}
      I_r & a\inv{p_s} \\
      & I_s
    \end{pmatrix}
    \mattwosym{h_r}{p_s}{}
    \begin{pmatrix}
      I_r & \\
      \inv{p_s}\trn{a} & I_s
    \end{pmatrix}.
  \]
  
  Performing this process recursively yields a product of upper-triangular matrices with ones on the diagonal, which is again an upper-triangular matrix with ones on the diagonal, yielding $\nu$, and a diagonal matrix, $f=\pi_F(p)$.
\end{proof}

If we wish to compute $\pi_F(p)$ for other flats, then we can apply the rotation of $F$ to $p$, compute using the recursive formula described above, and then apply the opposite rotation to the resulting diagonal matrix. Most of the time, however, it is most convenient to condition our data so that $\pi_F$ is computed for a diagonal flat $F$.

We now wish to analyze a simpler form of rotation, one that can be broken into rotations on separate axes.

\begin{lemma}
  \label{lem:Rotation-dc}
  Given a point $p\in\PD n$, a rotation matrix 
  $
    Q=\mattwosym{Q_r}{Q_s}{}, 
  $ 
  such that $r+s=n$ and $Q_r$, $Q_s$ are $r\times r$, $s\times s$ rotation matrices, respectively, and a geodesic $c(t) = e^{tQ A\trn Q}$ with $A=\mattwosym{A_r}{A_s}{}$ sorted-diagonal, 
  \[
    b_c(p) 
    = -\tr(A_r\log(\pi_{F,r}(\actxt{Q_r}{h_r})))-\tr(A_s\log(\pi_{F,s}(\actxt{Q_s}{p_s}))).
  \]
\end{lemma}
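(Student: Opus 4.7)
The plan is to reduce the problem to the diagonal flat via Lemma~\ref{lem:horo-rotate}, and then apply the previous lemma (the recursive formula for $\pi_F$) after showing that the Schur-complement decomposition of $p$ interacts cleanly with the block-diagonal conjugation by $Q$.

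Set $\tilde c(t)=e^{tA}$, so $c(t)=e^{tQA\trn Q}=e^{t\actx{Q}{A}}$ is exactly the rotated geodesic of Lemma~\ref{lem:horo-rotate}. That lemma gives
\[
b_c(p)=b_{\tilde c}(\actxt{Q}{p}) = -\tr\bigl(A\,\log\pi_F(\actxt{Q}{p})\bigr),
\]
where the second equality is the explicit Busemann formula in $\PD n$ applied to the diagonal geodesic $\tilde c$. So the whole task is to evaluate $\pi_F(\actxt{Q}{p})$ and split the trace.

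Next I would do the block computation. Writing $p=\mattwobsym{p_r}{p_s}{a}$ and $Q=\mattwosym{Q_r}{Q_s}{}$, a direct multiplication shows
\[
\actxt{Q}{p}=\mattwobsym{\actxt{Q_r}{p_r}}{\actxt{Q_s}{p_s}}{\trn{Q_s}\trn{a}Q_r}.
\]
In particular the lower-right block of $\actxt{Q}{p}$ is $\actxt{Q_s}{p_s}$, and its Schur complement is
\[
\actxt{Q_r}{p_r} - (\trn{Q_r}aQ_s)(\actxt{Q_s}{p_s})^{-1}(\trn{Q_s}\trn{a}Q_r)
=\trn{Q_r}\bigl(p_r - a\inv{p_s}\trn{a}\bigr)Q_r
=\actxt{Q_r}{h_r},
\]
where the $Q_s$-factors cancel because $\inv{Q_s}=\trn{Q_s}$. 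So the Schur-complement decomposition commutes with the block-diagonal rotation, in the sense that conjugating by $Q$ and then taking the Schur complement is the same as first taking $h_r$ and then conjugating the two blocks separately by $Q_r$ and $Q_s$.

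Now I apply the previous lemma to $\actxt{Q}{p}$ rather than $p$: it yields
\[
\pi_F(\actxt{Q}{p}) = \mattwosym{\pi_{F,r}(\actxt{Q_r}{h_r})}{\pi_{F,s}(\actxt{Q_s}{p_s})}{}.
\]
Since both $\log\pi_F(\actxt{Q}{p})$ and $A$ are block-diagonal with blocks of sizes $r$ and $s$, the trace on the right-hand side of the Busemann formula splits into
\[
-\tr\bigl(A_r\log\pi_{F,r}(\actxt{Q_r}{h_r})\bigr) - \tr\bigl(A_s\log\pi_{F,s}(\actxt{Q_s}{p_s})\bigr),
\]
which is exactly the claimed expression. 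The only non-routine step is the Schur-complement computation, and the cancellation there is the main thing to verify; once that is in hand, the rest is just chaining Lemma~\ref{lem:horo-rotate}, the explicit Busemann formula, and the recursive form of $\pi_F$.
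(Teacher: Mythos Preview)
Your proposal is correct and follows essentially the same route as the paper: invoke Lemma~\ref{lem:horo-rotate} to reduce to $-\tr(A\log\pi_F(\trn{Q}pQ))$, compute the block form of $\trn{Q}pQ$ and verify that its Schur complement is $\trn{Q_r}h_rQ_r$, then apply the recursive $\pi_F$ lemma and split the trace. The only slip is a typo in the off-diagonal block of your displayed matrix (it should be $\trn{Q_r}aQ_s$ in the upper-right slot), but your subsequent Schur-complement line uses the correct entries, so the argument is unaffected.
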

\begin{proof}
  From Lemma \ref{lem:horo-rotate} we know that $b_c(p) = -\tr(A\log(\pi_F(\actxt{Q}{p})))$.
  Compute $\pi_F(\actxt{Q}{p})$ by first decomposing $p\mapsto\mattwobsym{p_r}{p_s}{a}$:
  \[
    \actxt{Q}{p} 
    = \mattwosym{\trn{Q_r}}{\trn{Q_s}}{}\mattwobsym{p_r}{p_s}{a}\mattwosym{Q_r}{Q_s}{}
    = \begin{pmatrix}
      \actxt{Q_r}{p_r} & \trn{Q_r}a Q_s \\
      \trn{Q_s}\trn{a} Q_r & \actxt{Q_s}{p_s}
    \end{pmatrix}.
  \]
  Now compute the Schur complement of $\actxt{Q_s}{p_s}$:
  \begin{align*}
    \actxt{Q_r}{p_r}-\trn{Q_r}a Q_s\inv{(\actxt{Q_s}{p_s})}\trn{Q_s}\trn{a}Q_r
    &= \actxt{Q_r}{p_r}-\trn{Q_r}a Q_s\actxt{Q_s}{\inv{p_s}}\trn{Q_s}\trn{a}Q_r\\
    &= \actxt{Q_r}{p_r}-\trn{Q_r}\actx{a}{\inv{p_s}}Q_r\\
    &= \actxt{Q_r}{(p_r-\actx{a}{\inv{p_s}})}.
  \end{align*}
  But $h_r = p_r-\actx{a}{\inv{p_s}}$ is just the Schur complement of $p_s$, so 
  \begin{align*}
    b_c(p) = -\tr(A\log(\pi_F(\actxt{Q}{p}))) &
    = -\tr\mattwosym{A_r\log(\pi_{F,r}(\actxt{Q_r}{h_r}))}{A_s\log(\pi_{F,s}(\actxt{Q_s}{p_s}))}{} \\&
    = -\tr(A_r\log(\pi_{F,r}(\actxt{Q_r}{h_r}))) -\tr(A_s\log(\pi_{F,s}(\actxt{Q_s}{p_s}))).
  \end{align*}
\end{proof}

This allows us to break the Lipschitz bound into smaller pieces that we can analyze individually. The following two corollaries give us a way to analyze the effects of $2\times 2$ rotation matrices:

\begin{corollary}
\label{cor:adjacent-rot}
  Given a point $p\in\PD n$, a rotation matrix 
  $
    Q=\begin{pmatrix}
      I_r & & \\
      & Q' & \\
      & & I_s 
    \end{pmatrix}, 
  $ 
  where $r+s+2=n$, $Q'$ is a $2\times 2$ rotation matrix corresponding to an angle of $\theta/2$, geodesics $c(t) = e^{tA}$ and $c'(t) = e^{tQ A\trn Q}$, then $|b_c(p) - b_{c'}(p)|$ is bounded as in Lemma \ref{lem:Lipschitz}. 
\end{corollary}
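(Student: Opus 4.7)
The plan is to apply Lemma~\ref{lem:Rotation-dc} twice, peeling off the trivial blocks so that the rotation $Q'$ is isolated on a single $2\times 2$ sub-block where Lemma~\ref{lem:Lipschitz} can be invoked directly. First, I would apply Lemma~\ref{lem:Rotation-dc} with the block split $(r, s+2)$, writing $Q = \diag(I_r,\tilde Q)$ with $\tilde Q = \diag(Q', I_s)$ and decomposing $A$ accordingly into $A_r$ and $A_{s+2}$. The lemma gives
\begin{equation*}
  b_{c'}(p) = -\tr\!\bigl(A_r \log \pi_{F,r}(h_r)\bigr) - \tr\!\bigl(A_{s+2} \log \pi_{F,s+2}(\actxt{\tilde Q}{p_{s+2}})\bigr),
\end{equation*}
while the analogous decomposition of $b_c(p)$ (taking $Q=I_n$) has the same first term, since $I_r^{T} h_r I_r = h_r$. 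Hence the top-block traces cancel in the difference.

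Next, apply Lemma~\ref{lem:Rotation-dc} inside $\PD{s+2}$ to the surviving term, using the split $(2,s)$ on $\tilde Q$: the outer factor is $Q'$, and the inner factor $I_s$ acts trivially, so by the same argument its contribution cancels. What remains is the middle-block contribution alone,
\begin{equation*}
  b_c(p) - b_{c'}(p) \;=\; -\tr\!\bigl(A_2 \log \pi_{F,2}(h_2)\bigr) + \tr\!\bigl(A_2 \log \pi_{F,2}(\actxt{Q'}{h_2})\bigr),
\end{equation*}
where $A_2$ is the middle $2\times 2$ diagonal block of $A$ and $h_2 \in \PD 2$ is the iterated Schur complement of $p$ obtained by first eliminating its lower-right $s\times s$ block and then reading off the top-left $2\times 2$ block. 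By Lemma~\ref{lem:horo-rotate}, this difference equals $b_\gamma(h_2) - b_{\gamma'}(h_2)$ for the $\PD 2$ geodesics $\gamma(t)=e^{tA_2}$ and $\gamma'(t)=e^{t\actx{Q'}{A_2}}$, and Lemma~\ref{lem:Lipschitz} therefore yields $|b_c(p)-b_{c'}(p)| \le |\theta|\sqrt{2}\sinh(\GA(h_2)/\sqrt{2})$.

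The main obstacle is the final comparison $\GA(h_2)\le\GA(p)$, required so that $\sinh$-monotonicity upgrades the previous estimate to the bound stated in Lemma~\ref{lem:Lipschitz}. I would handle this by applying Cauchy interlacing to $p^{-1}$: using $(p^{-1})_{\text{top-left}} = h_r^{-1}$ at each stage of Schur complementation, the two eigenvalues $\lambda_1(h_2),\lambda_2(h_2)$ both lie in $[\lambda_{\min}(p),\lambda_{\max}(p)]$. Then $\GA(h_2) = \tfrac{1}{\sqrt 2}|\log\lambda_1(h_2)-\log\lambda_2(h_2)| \le \tfrac{1}{\sqrt 2}(\log\lambda_{\max}(p)-\log\lambda_{\min}(p))$, while the elementary inequality $a^2+b^2\ge\tfrac12(a+b)^2$ applied to the two extreme terms of $\GA(p)^2=\sum_i(\log\lambda_i(p)-\bar\mu)^2$ gives $\GA(p)\ge\tfrac{1}{\sqrt 2}(\log\lambda_{\max}(p)-\log\lambda_{\min}(p))$. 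Chaining the two inequalities produces $\GA(h_2)\le\GA(p)$, completing the proof.
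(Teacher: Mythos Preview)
Your approach is exactly the paper's: apply Lemma~\ref{lem:Rotation-dc} twice, using that $I_r$ and $I_s$ are themselves rotation matrices, to isolate the $2\times 2$ block and then invoke Lemma~\ref{lem:Lipschitz}. The paper's proof is literally a one-line remark to this effect; you have supplied the details the paper omits, in particular the comparison $\GA(h_2)\le\GA(p)$ needed to state the bound in terms of the ambient point.

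Two small points worth cleaning up. First, your description of $h_2$ is slightly garbled: the first application of Lemma~\ref{lem:Rotation-dc} with split $(r,s+2)$ passes the \emph{lower-right} $(s+2)\times(s+2)$ principal submatrix $p_{s+2}$ to the second stage (not a Schur complement of $p$), and $h_2$ is then the Schur complement of the bottom $s\times s$ block of $p_{s+2}$. Your interlacing argument still goes through once you note that one stage is ``principal submatrix'' interlacing and the other is ``Schur complement $=$ inverse of a principal submatrix of $p_{s+2}^{-1}$'' interlacing; both yield $\lambda_i(h_2)\in[\lambda_{\min}(p),\lambda_{\max}(p)]$. Second, when you invoke Lemma~\ref{lem:Lipschitz} on the $2\times 2$ block, the tangent $A_2$ need not be unit norm; but since $-\tr(A_2\log\pi_{F,2}(\cdot))=\|A_2\|\,b_{\hat\gamma}(\cdot)$ for the unit-speed geodesic $\hat\gamma$ and $\|A_2\|\le\|A\|=1$, this only helps.
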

\begin{proof}
  This is easily seen after observing that $I_r$ and $I_s$ are also rotation matrices, so Lemma \ref{lem:Rotation-dc} can be applied twice.
\end{proof}

\begin{corollary}
  The results of Corollary \ref{cor:adjacent-rot} extend to rotations between any two coordinates, that is, where $Q'$ is of the form
  \[
    \begin{pmatrix}
      \cos(\theta/2) & & -\sin(\theta/2) \\
      & I_t & \\
      \sin(\theta/2) & & \cos(\theta/2) \\
    \end{pmatrix}.
  \]
\end{corollary}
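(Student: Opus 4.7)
The plan is to reduce this non-adjacent case to Corollary~\ref{cor:adjacent-rot} by conjugating with a permutation matrix that brings the two rotating coordinates into adjacent positions. Let $Q = \diag(I_r, Q', I_s)$ denote the full $n\times n$ rotation in which $Q'$ is embedded, so $Q'$ acts nontrivially on coordinates $r+1$ and $r+t+2$. I would choose the permutation matrix $P \in O(n)$ that swaps coordinates $r+2$ and $r+t+2$ (and fixes all others). Conjugation by $P$ moves the off-diagonal $\pm \sin(\theta/2)$ entries from positions $(r+1, r+t+2)$ and $(r+t+2, r+1)$ to the adjacent positions $(r+1, r+2)$ and $(r+2, r+1)$, yielding $P^T Q P = \diag(I_r, \tilde Q, I_{t+s})$ where $\tilde Q$ is a $2\times 2$ rotation through the same angle $\theta/2$, exactly the shape treated by Corollary~\ref{cor:adjacent-rot}.

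Next, set $\tilde A = P^T A P$ and introduce the auxiliary geodesics $\tilde c(t) = e^{t \tilde A}$ and $\tilde c'(t) = e^{t \tilde Q \tilde A \tilde Q^T}$. Since $P$ is orthogonal, the map $p \mapsto P^T p P$ is an isometry of $\PD n$, and Lemma~\ref{lem:horo-rotate} applies with $P$ in place of the rotation $Q$ (its proof uses only $P P^T = I$, not $\det P = +1$), giving $b_c(p) = b_{\tilde c}(P^T p P)$ and $b_{c'}(p) = b_{\tilde c'}(P^T p P)$. Therefore the quantity of interest $|b_c(p) - b_{c'}(p)|$ equals the analogous difference of Busemann functions for the adjacent rotation $\tilde Q$, evaluated at the point $P^T p P$.

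Applying Corollary~\ref{cor:adjacent-rot} to $\tilde c$ and $\tilde c'$ at $P^T p P$ bounds this difference by $|\theta| \cdot \sqrt{2} \sinh(\GA(P^T p P)/\sqrt{2})$. Since $\GA$ depends only on the eigenvalues of its argument and those are preserved under orthogonal conjugation, $\GA(P^T p P) = \GA(p)$, so the bound reduces to the one from Lemma~\ref{lem:Lipschitz} (equivalently Theorem~\ref{thm:diam-bound} when $p \in X$). The only point that requires verification is that Lemma~\ref{lem:horo-rotate} remains valid for orthogonal $P$ rather than rotations in $\SO n$, which is immediate from its proof since nothing there invokes orientation; everything else is straightforward bookkeeping about how the flat decomposition and the projection $\pi_F$ transform under the permutation $P$.
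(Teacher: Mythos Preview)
Your approach is essentially the same as the paper's: both conjugate by a permutation matrix to bring the two rotating coordinates into adjacent positions and thereby reduce to Corollary~\ref{cor:adjacent-rot}. The paper uses a cyclic shift $E_{i+1,j}$ rather than your single transposition and unpacks the Busemann-function formula directly instead of invoking Lemma~\ref{lem:horo-rotate} as a black box, but the substance is identical; the paper likewise flags (as you do in your final sentence) that after permuting, the diagonal entries of $A$ are reordered, so the computation of $\pi_{\hat F}$ must take Schur complements of the appropriate corner.
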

\begin{proof}
  First observe that a rotation matrix $Q_{i,j}$ that rotates between axes $i$ and $j$ is equal to a matrix $E_{i+1,j}Q_{i,i+1}\trn{E_{i+1,j}}$, where $E_{i,j}$ is a permutation that moves row $i$ to row $j$ and shifts the intervening rows up. We assume $E=E_{i+1,j}$, $Q'=Q_{i,i+1}$, and $Q=\actx{E}{Q'}$ from here on.
  
  Assuming that $A$ is sorted-diagonal, we can compute $b_{c'}(p)$ as:
  \begin{align*}
    b_{c'}(p) 
    &= -\tr(\actx{(\actx{E}{Q'})}{A}\log(\actx{(\actx{(\actx{E}{Q'})}{\inv{\nu}})}{p})) \\
    &= -\tr((\actxt{E}{A})\log(\actx{(\actxt{E}{\inv{\nu}})}{\actxt{Q'}{(\actxt{E}{p})}})) \\
    &= -\tr(\hat A\log(\actxi{\hat\nu}{\actxt{Q'}{\hat p}})) \\
    &= -\tr(\hat A\log(\pi_{\hat F}(\actxt{Q'}{\hat p}))), 
  \end{align*}
  which can be computed as above; some care must be taken, however, since the order of elements of $\hat A$ is different than that of $A$. That is, in certain places, the Schur complement of the \emph{upper} corner must be taken to compute $\pi_{\hat F}$, rather than that of the lower corner.
\end{proof}

\subsubsection{A Lipschitz condition on Busemann functions in $\PD n$}
\label{sec:lipsch-PDn}
We are now ready to prove the main Lipschitz result in $\PD n$. We start with a more specific bound that depends on the distance from a point $p$ to $I$:

\begin{lemma}
\label{lem:Lipschitz-PDn}
  Given a point $p\in\PD n$, a rotation matrix $Q \in \SO{n}$ corresponding to an angle of $\theta/2$, geodesics $c(t) = e^{tA}$ and $c'(t) = e^{tQ A\trn Q}$, 
  \[
    |b_c(p) - b_{c'}(p)| 
    \leq |\theta| \cdot {n \choose 2} \cdot \sqrt{2}\sinh\left(\frac{d(p,I)}{\sqrt{2}}\right).
  \]
\end{lemma}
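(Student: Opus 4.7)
The plan is to reduce the $\PD n$ case to many applications of the $\PD 2$ bound by factoring the rotation $Q\in\SO{n}$ into Givens rotations and chaining Corollary \ref{cor:adjacent-rot} (together with its extension to rotations between arbitrary coordinate pairs). This mirrors the recursive decomposition of $\pi_F$ established in Section \ref{sec:generalizing-pd-n}.

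Concretely, I would write $Q=R_1R_2\cdots R_m$ with $m=\binom{n}{2}$, where each $R_i$ is a planar rotation in a single coordinate pair through an angle $\theta_i/2$ satisfying $|\theta_i|\leq|\theta|$. Set $Q_0=I$, $Q_i=R_1\cdots R_i$, and $c_i(t)=e^{t\actx{Q_i}{A}}$, so that $c_0=c$ and $c_m=c'$. The triangle inequality then gives
\[
  |b_c(p)-b_{c'}(p)|\;\leq\;\sum_{i=1}^{m}|b_{c_{i-1}}(p)-b_{c_i}(p)|.
\]
For each summand, Lemma \ref{lem:horo-rotate} rewrites the difference as $|b_{c_0}(\tilde p_{i-1})-b_{c_0}(\actxt{R_i}{\tilde p_{i-1}})|$ where $\tilde p_{i-1}=\actxt{Q_{i-1}}{p}$, so consecutive geodesics differ only by a single planar rotation. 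The extension of Corollary \ref{cor:adjacent-rot} to arbitrary coordinate pairs then bounds this difference by
\[
  |\theta_i|\cdot\sqrt{2}\sinh\!\left(\frac{\GA(\tilde p_{i-1})}{\sqrt{2}}\right).
\]

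Because conjugation by a rotation matrix is an isometry fixing $I$, we have $d(\tilde p_{i-1},I)=d(p,I)$; and since $\GA(q)\leq d(q,I)$ for every $q\in\PD n$ and $\sinh$ is increasing on $\reals^{+}$, each summand is at most $|\theta_i|\cdot\sqrt{2}\sinh(d(p,I)/\sqrt{2})$. Summing over the $m=\binom{n}{2}$ factors, together with $\sum_i|\theta_i|\leq\binom{n}{2}|\theta|$, yields the stated bound.

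The main obstacle is the angle bookkeeping. One must exhibit a Givens-style decomposition of an arbitrary $Q\in\SO{n}$ into exactly $\binom{n}{2}$ planar rotations whose individual angles are each controlled by the ``total'' angle $\theta/2$ attributed to $Q$ in the hypothesis; this is the place where the interpretation of ``$Q$ corresponds to an angle of $\theta/2$'' needs to be pinned down (e.g.\ as the maximal Givens-factor angle in a canonical decomposition). A secondary technical point is ensuring that at each step the sorted-diagonal hypothesis on $A$ required by Lemma \ref{lem:Rotation-dc} (and hence by Corollary \ref{cor:adjacent-rot}) remains in force, or can be restored by an isometry $\trn Q(\cdot)Q$ that leaves the Busemann values invariant.
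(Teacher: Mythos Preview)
Your proposal is correct and follows essentially the same route as the paper: decompose $Q$ into $\binom{n}{2}$ planar (Givens) rotations with $|\theta_i|\le|\theta|$, telescope $|b_c(p)-b_{c'}(p)|$ over the intermediate geodesics, and bound each term via the $\PD 2$ result (Corollary~\ref{cor:adjacent-rot} and its extension). You are in fact more explicit than the paper in two places---invoking $\GA(\tilde p_{i-1})\le d(\tilde p_{i-1},I)=d(p,I)$ via the isometry, and flagging the angle-bookkeeping and sorted-diagonal caveats---both of which the paper's proof leaves implicit.
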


\begin{proof}
  Every rotation $Q$ may be decomposed into a product of rotations $Q=Q_1 Q_2\dots Q_k$ where $k={n \choose 2}$ and $Q_i$ is a $2\times 2$ sub-block rotation corresponding to an angle of $\theta_i/2$ with $|\theta_i| \leq |\theta|$.  
  Then 
  \[
    |b_c(p) - b_{c'}(p)| 
    = \left|\sum_{i=1}^k (b_{c'}^{i-1}(p) - b_{c'}^i(p))\right|
    \leq \sum_{i=1}^k |b_{c'}^{i-1}(p) - b_{c'}^i(p)|,
  \]
  where $b_{c'}^0(p) = b_c(p)$ and $b_{c'}^i(p)$ is $b_c(p)$ with the first $i$ rotations successively applied, so if $Q_i'=\prod_{j=1}^i Q_j$,
  \[
    b_{c'}^i(p) = b_c(\actxt{(Q_i')}{p}).
  \]
  But then 
  \[
    |b_{c'}^{i-1}(p) - b_{c'}^i(p)| 
    \leq |\theta_i| \cdot \sqrt{2}\sinh\left(\frac{d(p,I)}{\sqrt{2}}\right),
  \]
  and therefore 
  \[
    |b_c(p) - b_{c'}(p)|
    \leq \left(\sum_{i=1}^k|\theta_i|\right) \cdot \sqrt{2}\sinh\left(\frac{d(p,I)}{\sqrt{2}}\right)
    \leq
    |\theta| \cdot {n \choose 2} \cdot \sqrt{2}\sinh\left(\frac{d(p,I)}{\sqrt{2}}\right),
  \]
since for all $i$ we have $|\theta_i| \leq |\theta|$.
\end{proof}

We can now state our main Lipschitz result in $\PD n$.  

\begin{theorem}[Lipschitz condition on Busemann functions in $\PD n$]
\label{lem:Lipschitz-gen} 
Consider a set $X \subset \PD n$,
a rotation matrix $Q\in\SO{n}$ corresponding to an angle $\theta/2$, geodesics $c(t) = e^{tA}$ and $c'(t) = e^{tQ A\trn Q}$.  
Then for any $p \in X$
  \[
    |b_c(p) - b_{c'}(p)| 
    \leq |\theta| \cdot {n \choose 2} \cdot \sqrt{2}\sinh\left(\frac{d_X}{\sqrt{2}}\right).
  \]
\end{theorem}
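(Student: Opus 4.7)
The plan is to obtain this theorem as an almost-immediate corollary of Lemma \ref{lem:Lipschitz-PDn}, which already gives the pointwise Lipschitz bound in terms of $d(p,I)$ rather than $d_X$. All that remains is to replace the point-dependent quantity $d(p,I)$ with the uniform quantity $d_X$ defined at the start of this section (just after Definition \ref{defn:eps-ball-hull}).

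First I would recall that we have shifted the data into a frame in which some $q \in \ch X$ has been moved to the identity $I$ (this is precisely the content of Lemma \ref{lem:shifting-origin}), so that $d_X = \max_{p \in X} d(p, I)$ is a finite upper bound on the distance from any point of $X$ to $I$. Next I would apply Lemma \ref{lem:Lipschitz-PDn} verbatim to the given $p \in X$, $Q$, and geodesics $c, c'$, obtaining
\[
    |b_c(p) - b_{c'}(p)|
    \leq |\theta| \cdot {n \choose 2} \cdot \sqrt{2}\sinh\left(\frac{d(p,I)}{\sqrt{2}}\right).
\]

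The only substantive observation is that $\sinh$ is monotonically increasing on $[0, \infty)$, so since $d(p, I) \leq d_X$ for all $p \in X$, we may replace $d(p,I)$ by $d_X$ in the right-hand side without decreasing it, yielding the claimed bound. There is no real obstacle here since the heavy lifting---decomposing an arbitrary rotation into ${n \choose 2}$ elementary $2 \times 2$ rotations via Lemma \ref{lem:Rotation-dc} and its corollaries, and bounding the effect of each one via the $\PD 2$ argument in Lemma \ref{lem:Lipschitz}---was already carried out in Section \ref{sec:generalizing-pd-n}. The theorem simply packages the result in a form suited to the $\eps$-ball hull algorithm, where we need a bound uniform over the input set $X$ rather than one that varies pointwise.
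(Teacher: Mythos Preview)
Your proposal is correct and matches the paper's approach exactly: the paper states Theorem~\ref{lem:Lipschitz-gen} immediately after Lemma~\ref{lem:Lipschitz-PDn} without any explicit proof, treating it as the obvious corollary obtained by replacing $d(p,I)$ with the uniform bound $d_X \ge d(p,I)$ and using the monotonicity of $\sinh$. One small remark: the shifting via Lemma~\ref{lem:shifting-origin} is not strictly needed for the inequality itself, since $d_X = \max_{p\in X} d(p,I)$ by definition; the shift only serves to ensure $d_X \le \diam{X}$ for later use.
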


\subsection{Algorithm}
\label{sec:slow-algorithm}
For $X \subset \PD n$ we can construct $\eps$-ball hull as follows.  
We place a grid $G_\eps$ on $\SO n$ so that for any $Q' \in \SO n$, there is another $Q \in G_\eps$ such that 
the angle between $Q$ and $Q'$ is at most 
$(\eps/2)/(2{n \choose 2} \sqrt{2}\sinh(d_{X}/{\sqrt{2}}))$.  
For each $Q \in G_\eps$, we consider $\pi_F(X)$, the projection of $X$ into the associated $n$-flat $F$ associated with $Q$.  Within $F$, we construct a convex hull of $\pi_F(X)$, and return the horoball associated with each hyperplane passing through each facet of the convex hull, as in Lemma \ref{lem:horo-hyper-dual}.  

To analyze this algorithm we can now consider any direction $Q' \in \SO n$ and a horofunction $b_{c'}$ that lies in the associated flat $F'$.  There must be another direction $Q \in G_\eps$ such that the angle between $Q$ and $Q'$ is at most $(\eps/2)/(2{n \choose 2} \sqrt{2}\sinh(d_{X}/\sqrt{2}))$.  Let $b_{c}$ be the similar horofunction to $b_{c'}$, except it lies in the flat $F$ associated with $Q$.  This ensures that for any point $p \in X$, we have $|b_{c'}(p) - b_{c}(p)| \leq \eps/2$.  
Since $\ext{c'}{X}$ depends on two points in $X$, and each point changes at most $\eps/2$ from $b_{c'}$ to $b_{c}$ we can argue that $|\ext{c'}{X} - \ext{c}{X}| \leq \eps$. 
Since this holds for any direction $Q' \in \SO n$ and for $Q \in G_\eps$ the function $\ext{c}{X}$ is exact, the returned set of horoballs defines an $\eps$-ball hull.  

Since (with constant $n$) the grid $G_\eps$ is of size $O((\sinh(d_X)/\eps)^{n-1})$ and computing the convex hull in each flat takes $O(N^{\lfloor n/2 \rfloor} + N \log N)$ time this proves Theorem \ref{thm:epsBH-alg}.

\section{Center Points}

In Euclidean space a \emph{center point} $p$ of a set $X \subset \b{R}^d$ of size $N$ has the property that any halfspace that contains $p$ also contains at least $N/(d+1)$ points from $X$.  Center points always exist~\cite{Rad47} and there exists several algorithms for computing them exactly~\cite{JM94} and approximately~\cite{CEMST96,MS09}.  

We cannot directly replicate the notion of center points in $\PD n$ with horoballs.    Instead we replace it with a slightly weaker notion, which is equivalent in Euclidean space.  A \emph{horo-center point} $p$ of a set $X \subset \PD n$ (or $\b{R}^d$) of size $N$ has the property that any horoball that contains more than $N d/(d+1)$ points must contain $p$,
where we define $d = n(n+1)/2$ so that $\PD n$ is a $d$-dimensional manifold.

\paragraph{Construction for no center point in $\PD n$.}

Analogous to Euclidean center points, a center point $p$ of $X \subset \PD n$ of $N$ points has the property that any horoball that contains $p$ must also contain at least $N/(d+1)$ points from $X$.  

\begin{theorem}
For a set $X \subset \PD n$ there may be no center point.  
\end{theorem}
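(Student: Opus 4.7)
The plan is to exhibit a specific set $X \subset \PD{2}$ of size $N = 5$ admitting no center point. For $n = 2$ we have $d = n(n+1)/2 = 3$, so the threshold $N/(d+1) = 5/4$ says that every horoball containing a center point $p$ must meet $X$ in at least two of the five points. It therefore suffices, for every $p \in \PD{2}$, to produce one horoball that contains $p$ and meets $X$ in at most one point.

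For the construction I use the (up to a constant factor) isometric embedding of $\hboln{2}$ into the determinant-one slice $\PD{2}_1$ from Lemma \ref{lem:constant-curve}. This slice is totally geodesic (geodesics through $I$ whose tangent is traceless preserve the determinant), so every Busemann function of $\hboln{2}$ arising from a ray inside the slice is simultaneously a $\PD{2}$-Busemann function. Since $X$ will lie entirely in $\PD{2}_1$, the number of $X$-points inside the ambient $\PD{2}$-horoball agrees with the number inside its $\hboln{2}$-counterpart. Working in the upper-half-plane model of $\hboln{2}$, take $x_i = (i - 3,\, 1)$ for $i = 1, \ldots, 5$: five points spaced at unit distance on the horocycle $\{y = 1\}$.

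The verification splits on $\det p$. If $\det p > 1$, take the isotropic Busemann function $b(q) = -\frac{1}{\sqrt{2}} \log \det q$ arising from the direction $A = I/\sqrt{2}$ (using $\det \pi_F(q) = \det q$); the horoball through $p$ is $\{q : \det q \geq \det p\}$, which lies strictly above $\PD{2}_1$ and so contains no $x_i$. If $\det p < 1$, use the opposite direction $A = -I/\sqrt{2}$ to get the horoball $\{q : \det q \leq \det p\}$, which again contains no $x_i$. If $\det p = 1$, identify $p$ with $(x_p, y_p)$ in the upper half-plane. When $y_p > 1$, the horoball at $\xi = \infty$ through $p$ is $\{y \geq y_p\}$ and lies strictly above the horocycle $\{y = 1\}$. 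When $y_p \leq 1$, the horoball at $\xi = x_p$ through $p$ works out to be the Euclidean disk of radius $y_p/2$ centered at $(x_p,\, y_p/2)$, which touches $\{y = 1\}$ in at most the single point $(x_p, 1)$, and hence meets $X$ in at most one point.

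The hard part will be the last sub-case, where $p$ sits on the very horocycle through $X$: we need a horoball tangent to $\{y = 1\}$ exactly at $p$ and narrow enough to miss every other $x_i$. Basing the horoball at the real-axis point directly beneath $p$ and taking the unique horoball passing through $p$ gives Euclidean radius $1/2$, strictly less than the unit horizontal spacing of $X$, so no neighboring $x_i$ is absorbed. The remaining cases reduce to checking that one-parameter families of horoballs either lie entirely off $\PD{2}_1$ or entirely off the horocycle $\{y = 1\}$, which is a routine computation.
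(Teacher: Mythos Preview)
Your argument is correct and establishes the theorem for $n=2$. It takes a genuinely different route from the paper's proof. The paper places the points of $X$ along a single \emph{geodesic} inside a hyperbolic submanifold and then argues qualitatively: for $p$ off the geodesic one can separate $p$ from all of $X$ by a horoball, while for $p$ on the geodesic one uses the strict convexity of horoballs in the hyperbolic submanifold (in the Poincar\'{e} model they are round disks and geodesics are circular arcs) to find a horoball tangent to the geodesic at exactly the point $p$, hence containing at most one $x_i$. Your construction instead puts the points on a \emph{horocycle} and does everything by explicit computation in the upper-half-plane model, together with the isotropic direction $A=\pm I/\sqrt{2}$ to dispose of points off the slice $\PD{2}_1$. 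What you gain is concreteness: every horoball you use is written down, and no appeal to ``there exists a tangent horoball'' is needed. What the paper's argument gains is uniformity: it works for any $n$ and any $N$ without modification.

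That last point is worth flagging as a small gap in your write-up. As written, you only treat $\PD{2}$; the theorem is stated for $\PD{n}$. Your determinant trick for the case $\det p\neq 1$ relies on the fact that $\PD{2}_1$ is exactly the hyperbolic slice, which fails for $n\ge 3$ (the determinant-one locus is then $(d-1)$-dimensional, not $2$-dimensional). To extend your approach you would embed $\hboln{2}$ totally geodesically into $\PD{n}$, take $N>d+1$ points on a horocycle there, keep your upper-half-plane analysis verbatim for $p$ lying in that copy of $\hboln{2}$, and supply a separate argument---closer in spirit to the paper's---for $p$ outside it. None of this is difficult, but it should be stated.
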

\begin{proof}  
Consider a set of distinct points $X \in \PD n$ such that all points $X$ lie on a single geodesic $\alpha$ between $x_1$ and $x_N$ where $x_1, x_N \in X$. Furthermore, let the points lie in a hyperbolic submanifold of $\PD{n}$.
Now, for any point $p$ not on $\alpha$ there is a horoball that contains $p$ but contains none of $X$.  So if there is a center point, it must lie on $\alpha$.  However, also for any point $p \in \alpha$ there is a horoball that intersects $\alpha$ at only $p$, since the cross-section of any horoball in the hyperbolic submanifold will be strictly convex (it can be represented as a hyperball in the Poincar\'e model, and geodesics are circular arcs, so there is a horoball tangent to the geodesic at one point).  Thus for any possible center point $p$ there is a horoball that contains at most $1$ point of $X$.  Hence, there can be no center point.  
\end{proof}

\paragraph{Horo-center points in $\PD n$.}

The ``simple'' proof of the existence of center points~\cite{Mat02} uses Helly's theorem to show that a horo-center point always exist, and then in Euclidean space a halfspace separation theorem can be used to show that a horo-center point is also a center point.  We replicate the first part in $\PD n$, but cannot replicate the second part because horoballs do not have the proper separation properties when not defined in $\b{R}^d$.  

\begin{theorem}
Any set $X \subset \PD n$ has a horo-center point.
\end{theorem}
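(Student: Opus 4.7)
The plan is to mimic the classical Helly-based proof of the existence of centerpoints, substituting ball hulls for convex hulls and invoking a Helly-type theorem for convex sets in the Hadamard manifold $\PD n$ (of topological dimension $d = n(n+1)/2$). Concretely, for each subset $S \subseteq X$ with $|S| > Nd/(d+1)$, consider the ball hull $\bh{S}$, which by the results of Section~\ref{sec:ball-hulls} is closed and convex. Since $X$ is finite, there are only finitely many such subsets, yielding a finite family $\mathcal{F}$ of closed convex sets; I want to show the intersection $\bigcap_{\bh S \in \mathcal F}\bh S$ is nonempty and that any point in it serves as a horo-center point.

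The first step is a counting argument showing that every $d+1$ members of $\mathcal{F}$ have a common point. Given $S_1, \ldots, S_{d+1}$ each of size strictly greater than $Nd/(d+1)$, each complement $X \setminus S_i$ has size strictly less than $N/(d+1)$, so by a union bound
\[
    |S_1 \cap \cdots \cap S_{d+1}| \;>\; N - (d+1)\cdot \tfrac{N}{d+1} \;=\; 0.
\]
Any common element $x$ lies in every $S_i$ and so in every $\bh{S_i}$. The second step is to invoke a generalization of Helly's theorem valid in complete CAT(0) manifolds of dimension $d$: since each $\bh{S_i}$ is closed and geodesically convex, and every $d+1$ of them meet, the entire finite family shares a point $p$.

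It remains to check that $p$ is a horo-center point. Let $B = \hb{b_c}{r}$ be any horoball with $|X \cap B| > Nd/(d+1)$, and let $S = X \cap B$. Then $\bh{S} \in \mathcal{F}$, so $p \in \bh{S}$. But $B$ itself is one of the horoballs containing $S$, and $\bh{S}$ is by definition the intersection of all such horoballs, so $\bh{S} \subseteq B$ and hence $p \in B$, as required.

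The main technical obstacle is the Helly step: ordinary Helly uses the affine dimension of Euclidean space, whereas here the convex sets live in a curved manifold and are not cut out by hyperplanes. I would appeal to the extension of Helly's theorem to geodesically convex sets in complete simply connected manifolds of nonpositive curvature, with the appropriate dimension being the topological dimension $d = n(n+1)/2$ of $\PD n$; this is the "generalization of Helly's theorem to negatively curved spaces" alluded to in the introduction. The counting step and the conversion back from ball hulls to horoballs are routine once this tool is in hand.
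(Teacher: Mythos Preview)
Your proposal is correct and follows essentially the same route as the paper: define a family of ball hulls of the ``large'' subsets of $X$, use the counting/union-bound argument to show any $d+1$ of them share a point of $X$, invoke the Helly-type theorem for closed convex sets in Cartan-Hadamard manifolds (the paper cites~\cite{LTZ05}), and then observe that any horoball containing more than $Nd/(d+1)$ points contains one of these ball hulls and hence the common point. The only cosmetic difference is that the paper restricts its family to subsets of the form $X'=X\cap\hb{b_c}{r}$ rather than all subsets of size $>Nd/(d+1)$; your larger family still satisfies the $(d+1)$-wise intersection hypothesis by the same counting argument, so this changes nothing.
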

\begin{proof}
We use the following Helly Theorem on Cartan-Hadamard manifolds (which include $\PD n$) of dimension $d$~\cite{LTZ05}.  For a family $\Eu{F}$ of closed convex sets, if any set of $d+1$ sets from $F$ contain a common point, then the intersection of all sets in $\Eu{F}$ contain a common point.  

In $\PD n$ we consider the family $\Eu{F}$ of closed convex sets defined as follows.  A set $F \in \Eu{F}$ is defined by a horofunction $b_c$ and a subset $X' \subset X$ of size greater than $N d/(d+1)$ such that $X'$ is the intersection of $X$ and a horoball $B_r(b_c)$.  Then $F = \bh{X'}$ is the ball hull of $X'$, so $F \subset B_r(b_c)$ and $F$ is compact.  

We can argue that any set of $d+1$ sets from $\Eu{F}$ must intersect.  We can count the number of points not in any $d+1$ sets as 
$$
S < \sum_{i=1}^{d+1} (N - N d/(d+1)) = \sum_{i=1}^{d+1} (N(1/(d+1)) = N.
$$
So there must be at least one point in $X$ that is in all of the $d+1$ sets.  Then by the Helly-type theorem there exists a point $p$ such that $p \in F$ for any $F \in \Eu{F}$.  

We can now show that this point $p$ must be a horo-center point.  
Any horoball that contains more than $Nd /(d+1)$ points from $X$ contains an element of $\Eu{F}$, thus it must also contain $p$.  
\end{proof}

\subsection{Algorithms for Horo-Center Points}
We provide justification for why it appears difficult to describe an exact algorithm for constructing horocenter points in $\PD n$ and then provide an algorithm for an approximate horocenter point in $\PD n$.  

Before we begin we need a useful definition of a family of problems.  An \emph{LP-type Problem}~\cite{SW92} takes as input a set of constraints $H$ and a function $\omega : 2^H \to \b{R}$ that we seek to minimize, and it has the following two properties.  
\textsc{Monotonicity:} For any $F \subseteq G \subseteq H$, $\omega(F) \leq \omega(G)$.
\textsc{Locality:} For any $F \subseteq G \subseteq H$ with $\omega(F) = \omega(G)$ and an $h \in H$ such that $\omega(G \cup h) > \omega(G)$ implies that $\omega(F \cup h) > \omega(F)$.
A \emph{basis} for an LP-type problem is a subset $B \subset H$ such that $\omega(B') < \omega(B)$ for all proper subsets $B'$ of $B$.  And we say that $B$ is a basis for a subset $G \subseteq H$ if $\omega(B) = \omega(G)$ and $B$ is a basis.  
The cardinality of the largest basis is the \emph{combinatorial dimension} of the LP-type problem.  
LP-type problems with constant combinatorial dimensions can be solved in time linear in the number of constraints~\cite{CM96}.  

\begin{lemma}
A set $H$ of horoballs in $\PD n$, and a function $\omega(G) = \min_{p \in \bigcap_{H}} \det(p)$ is an LP-type problem with constant combinatorial dimension.  
\end{lemma}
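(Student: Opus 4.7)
The plan rests on a structural fact about $\det$ on $\PD n$: up to a sign and a factor of $\sqrt{n}$, $\log\det$ equals the Busemann function $b_c$ associated with the unit-speed geodesic $c(t)=e^{tI/\sqrt{n}}$, since $\det(\nu f\trn{\nu})=\det f$ for $\nu\in N_\xi$ and $\tr\log f=\log\det f$. Equivalently, along any geodesic $\gamma(t)=p^{1/2}e^{tY}p^{1/2}$ one has $\log\det(\gamma(t))=\log\det(p)+t\tr(Y)$, so $\log\det$ is affine on every geodesic of $\PD n$. Minimizing $\det$ over an intersection of horoballs is therefore, in a precise sense, a geodesic-linear program on the Cartan--Hadamard manifold $\PD n$ whose feasible region is convex, which is exactly the structure required by LP-type.

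Monotonicity is immediate: $F\subseteq G$ implies $\bigcap F\supseteq\bigcap G$, and minimizing over a larger set gives a value no greater, so $\omega(F)\leq\omega(G)$.

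For locality, the subtlety is that an affine function on a convex set can be minimized on a positive-dimensional face, so the minimizer need not be unique. I would resolve this in the standard LP-type fashion by refining $\omega$ to a lexicographic tuple whose first coordinate is $\det(p^\star)$ and whose remaining coordinates break ties using the coordinates of $p^\star$ in a fixed orthonormal basis of $\Sym n$. With this refined $\omega$, the lex-minimizer $p^\star$ over $\bigcap G$ is unique. If $F\subseteq G$ and $\omega(F)=\omega(G)$, then $p^\star_G\in\bigcap G\subseteq\bigcap F$ also realizes the common lex value over $\bigcap F$, and uniqueness forces $p^\star_F=p^\star_G$. If $h$ satisfies $\omega(G\cup\{h\})>\omega(G)$, then $p^\star\notin h$, which excludes $p^\star$ from $\bigcap(F\cup\{h\})$ as well; hence $\omega(F\cup\{h\})>\omega(F)$, proving locality.

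The hardest step is bounding the combinatorial dimension, since the Euclidean Carath\'eodory argument does not transfer verbatim to a curved manifold. The plan is to argue locally at the unique lex-optimum $p^\star$: only the \emph{active} horoballs (those whose horosphere passes through $p^\star$) can appear in a basis, and the first-order optimality condition at $p^\star$ expresses $-\nabla\log\det(p^\star)$ as a nonnegative combination of the inward gradients $-\nabla b_{h_i}(p^\star)$ in the tangent space $T_{p^\star}\PD n\cong\Sym n$, which has dimension $d=n(n+1)/2$. Applying Carath\'eodory inside this finite-dimensional tangent vector space shows that at most $d$ active horoballs suffice to certify optimality, while the fixed-size lexicographic tie-breaking contributes at most a constant additive term. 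Thus any basis has size bounded by a constant depending only on $n$, of order $d+1=n(n+1)/2+1$, yielding the claimed constant combinatorial dimension.
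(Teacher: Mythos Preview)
Your argument is correct; it follows the same three-part LP-type template as the paper but handles locality and the dimension bound differently. For locality, the paper works with the full set $P$ of $\det$-minimizers over $\bigcap_G$ and asserts that $P\cap h=\emptyset$ forces $\omega(F\cup\{h\})>\omega(F)$, tacitly identifying the argmin sets over $\bigcap_F$ and $\bigcap_G$; your lexicographic refinement is the standard LP-type device that makes this identification rigorous by collapsing the argmin to a single point. For the combinatorial dimension, the paper counts geometrically: the optimum $p^\star$ must lie on every horosphere in a basis, each horosphere is a $(d{-}1)$-dimensional submanifold of the $d$-dimensional manifold $\PD n$, and transversal intersections drop the dimension by one each time, capping the basis size at $d=n(n+1)/2$. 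Your route---first-order optimality at $p^\star$ followed by Carath\'eodory applied to the active-constraint gradients in the $d$-dimensional tangent space $T_{p^\star}\PD n\cong\Sym n$---is the dual version of the same count and is closer to the standard LP-type template. Your opening remark that $\log\det$ is geodesically affine (indeed a Busemann function for the ray $e^{tI/\sqrt n}$) is not stated in the paper, but it is exactly what underlies both the paper's implicit convexity of the argmin face and your KKT sufficiency step.
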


\begin{proof}
Monotonicity holds since in adding more horoballs to the a set $F \subset H$ to get a set $G \subset H$ (i.e. so $F \subset G$) we have $\bigcap_{G} \subseteq \bigcap_{F}$.  

To show locality, we consider subsets $F \subseteq G \subseteq H$ such that $\omega(F) = \omega(G)$.  Let $P$ be the set of points $\{p \in \PD n \mid \omega(p) = \min_{q \in \bigcap_{G}} \omega(q)\}$.  Adding a constraint (a horoball) $h$ to $G$ only causes $\omega(G \cup h) > \omega(G)$ if $P \cap h = \emptyset$ and thus $P \cap (\bigcap_{G \cup h}) = \emptyset$.  Since $\bigcap_G \subset \bigcap_F$, then also $P \cap (\bigcap_{F \cup h}) = \emptyset$ and $\omega(F \cup h) > \omega(F)$.  

We now show that our problem has combinatorial dimension $d = n(n+1)/2$.  
$\PD n$ is a $d$-dimensional manifold.  Each constraint (a horosphere) is a $(d-1)$-dimensional sub-manifold of $\PD n$.  Thus let $p^* = \arg \min_{p \in \bigcap_F} \omega(p)$.  If a constraint $h$ lies in the the basis $B \subset F$, then $p^*$ must lie on the corresponding horosphere.  Hence, this reduces the problem by $1$ dimension.  And each subsequent horosphere $h'$ we add to the basis, must also include $p^*$, so it must intersect $h$ (and all other horospheres in the basis) transversally, reducing the dimension by $1$.  (If $h,h' \in B$ do not intersect transversally, then we can remove either one from the basis without changing $p$.)  This process can only add $d$ horospheres to $B$ because $\PD n$ is $d$-dimensional, thus the maximum basis size is $d$.  
\end{proof}

This lemma suggests the following algorithm for constructing a horo-center point.  
Consider all subsets $X' \subset X$ of $N d/(d+1)$ points, find the minimal horoball(s) which contain $X'$.  Each of these horoballs can then be seen as a constraint for the LP-type problem.  Then we solve the LP-type problem, returning a horo-center point.  Unfortunately, there is no finite bound on the number of horoballs defined by a subset $X'$.  Theorem \ref{thm:infinite-HB} indicates that it could be infinite.  Thus there are an infinite number of constraints that may need to be considered.  

In order to approximate the horocenter point, we use a similar approach as we did to approximate the ball hull.  We discretize the set of directions, and create a finite family of constraints for each direction.  Then we can solve the associated LP-type problem to find a horo-center point.  

More formally, we place a grid $G_\eps$ on $\SO n$ so that for any $Q' \in \SO n$ there is another $Q \in G_\eps$ such that the angle between $Q'$ and $Q$ is at most $\eps/({n \choose 2} 2 \sqrt{2} \sinh(d_X/\sqrt{2}))$.  For each $c$ corresponding to $Q \in G_\eps$, we consider $\pi_F(X)$, the projection of $X$ onto the $(d-1)$-flat $F$ corresponding to $Q$.  Within $F$, we can consider all subsets $X' \subset X$of $Nd/(d+1)$ points and find the hyperplanes defining the convex hull of $\pi_F(X')$.  This finite set of hyperplanes corresponds to a finite set of horoballs which serve as constraints for the LP-type problem in $\PD n$.

We say a point $\hat p$ is an \emph{$\eps$-approximate horo-center point} if there is a horo-center point $p$ such that for any horofunction $b_c$ we have $|b_c(p) - b_c(\hat p)| \leq \eps$.  

\begin{lemma}
A point $\hat p$ that satisfies all of the constraints defined by $X$ and $G_\eps$ is an $\eps$-approximate horo-center point of $X$.
\end{lemma}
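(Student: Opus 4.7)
The plan is to leverage the Lipschitz bound on Busemann functions (Theorem \ref{lem:Lipschitz-gen}) to show that $\hat p$ inherits an $\eps$-relaxed form of the horo-center condition in every direction of $\SO n$, and then to use the convexity and non-emptiness of the true horo-center set to extract a witness $p$ that is Busemann-close to $\hat p$.

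\textbf{Step 1: Lipschitz transfer to every direction.} I fix an arbitrary horofunction $b_{c'}$ with direction $Q' \in \SO n$ and $c'(0)=I$. By the grid spacing, there is $Q \in G_\eps$ whose angular distance to $Q'$ is at most $\eps / ({n \choose 2}\cdot 2\sqrt{2}\sinh(d_X/\sqrt{2}))$; let $b_c$ be the horofunction associated with $Q$. Theorem \ref{lem:Lipschitz-gen} then gives $|b_c(y) - b_{c'}(y)| \leq \eps/2$ for every $y$ with $d(y,I) \leq d_X$. This covers every $x \in X$ as well as $\hat p$ itself, since $\hat p$ lies inside every constraint horoball and hence inside $\bh X$, whose diameter from $I$ is bounded by $d_X$ in the shifted frame of Lemma \ref{lem:shifting-origin}.

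\textbf{Step 2: $\hat p$ approximately satisfies every horo-center constraint.} Let $r'$ be any threshold for which $\hb{b_{c'}}{r'}$ contains a subset $X' \subseteq X$ with $|X'| > Nd/(d+1)$. Step 1 forces $b_c(x) \leq r' + \eps/2$ for every $x \in X'$. Pick any size-$\lfloor Nd/(d+1) \rfloor$ subset $X'' \subseteq X'$: the minimal enclosing horoball in direction $Q$ of $X''$ has radius at most $r' + \eps/2$, and by Lemma \ref{lem:horo-hyper-dual} it is exactly one of the LP-type constraints, corresponding to the supporting hyperplane of the convex hull of $\pi_F(X'')$ whose outward normal is aligned with the tangent vector of $c$. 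Since $\hat p$ satisfies every constraint, $b_c(\hat p) \leq r' + \eps/2$; applying Theorem \ref{lem:Lipschitz-gen} a second time gives $b_{c'}(\hat p) \leq r' + \eps$.

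\textbf{Step 3: Witness horo-center.} Let $P$ be the set of all horo-centers of $X$; by the earlier Helly-type existence theorem, $P$ is nonempty, closed, and convex. Step 2 shows that $\hat p$ lies in the $\eps$-inflation $\hb{b_c}{r+\eps}$ of every horoball $\hb{b_c}{r}$ used to define $P$. Because $\|\nabla b_c\| = 1$ (Lemma \ref{lem:nabla}), this inflation is exactly the metric $\eps$-neighborhood of the corresponding horoball. I take $p$ to be the nearest-point projection of $\hat p$ onto $P$, which is well defined and unique since $\PD n$ is a proper $\CAT 0$ space and $P$ is closed convex. From the per-constraint $\eps$-proximity together with the structural fact that $P$ is the intersection of horoballs with uniform unit Busemann gradients, I conclude $d(\hat p, p) \leq \eps$. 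Since every Busemann function is $1$-Lipschitz, $|b_c(\hat p) - b_c(p)| \leq d(\hat p,p) \leq \eps$ for every horofunction $b_c$, exhibiting $p$ as the required witness.

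The main obstacle is the metric conclusion in Step 3: in general convex geometry, a point that is $\eps$-close to every member of a family of convex sets may be strictly farther than $\eps$ from their common intersection. Discharging this requires exploiting the specific geometry of horoballs in $\PD n$ (uniform unit-norm Busemann gradient, the Helly-generated non-emptiness of $P$, and the fact that the horoballs defining $P$ form a nested family in each direction), rather than appealing to generic convex geometry. The cleanest route I see is an argument by contradiction: if $d(\hat p, p) > \eps$, the unique $\CAT 0$ geodesic between $\hat p$ and $p$ would have to exit some defining horoball in excess of the $\eps$-slack secured by Step 2, producing a direction in which the LP-type constraint is already violated.
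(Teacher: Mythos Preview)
Your Steps 1 and 2 are fine and track the paper's Lipschitz-transfer argument. The genuine gap is exactly where you flag it: Step 3. Taking $p$ to be the nearest-point projection of $\hat p$ onto the horo-center set $P$ does not give you $d(\hat p,p)\le\eps$ from the fact that $\hat p$ lies in the $\eps$-inflation of every defining horoball. Your proposed contradiction route does not close either: if $d(\hat p,p)>\eps$, the geodesic from $\hat p$ to $p$ can cross the boundary of a defining horoball \emph{obliquely}, so the Busemann value of $\hat p$ for that horoball can still be within $\eps$ of the threshold even though the metric distance traveled exceeds $\eps$. Unit gradient (Lemma~\ref{lem:nabla}) only converts Busemann increments into metric distance when the path is along the gradient direction, and the nearest-point projection gives you no such alignment.

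The paper avoids this by choosing the witness $p$ differently. Instead of projecting, it finds (via a fixed-point/flow argument on $\partial C(X)$) a point $p\in\partial C(X)$ and a geodesic $c$ such that $\hat p$ and $p$ both lie on $c$ \emph{and} $p=\arg\max_{p'\in C(X)} b_c(p')$. That second condition means the supporting constraint of $C(X)$ at $p$ is precisely a horoball with direction $c$; so the Lipschitz transfer (your Step~2) applied to that single direction bounds $b_c(\hat p)-b_c(p)\le\eps$. Because $\hat p$ and $p$ lie on $c$, which is the gradient curve of $b_c$, this Busemann difference \emph{equals} $d(\hat p,p)$, giving $d(\hat p,p)\le\eps$ directly. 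The final $1$-Lipschitz step to all other horofunctions is then the same as yours. In short, the missing idea is to pick $p$ so that the geodesic $\hat p\to p$ is aligned with the gradient of the active constraint, rather than to pick $p$ by metric projection and hope the alignment comes for free.
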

\begin{proof}
Let $C(X) \subset \PD n$ be the set of horo-center points.
Assume that $\hat p \notin C(X)$, otherwise let $p = \hat p$ and we are done.  

We show the existence of a specific nearby horo-center point $p$ with the following property.  Let $c$ be the geodesic connecting $\hat p$ and $p$, and let $p = \max_{p' \in C(X)} b_c(p')$.  
For any $q \in \bdry{C}(X)$ let $\alpha$ be the geodesic connecting $q$ and $\hat{p}$.  If $q = \arg \max_{p' \in C(X)} b_\alpha(p')$, we are done, if not, let $q_\alpha \in \bdry{C(X)}$ such that $q_\alpha = \arg \max_{p' \in C(X)} b_\alpha(p')$.  Then the geodesic ray on $C(X)$ that goes from $q$ to $q_\alpha$ describes a flow on $C(X)$.  We can see that the fixed point of this flow is $p$, because as we move to $\bar q$ in the direction of this flow, $b_\alpha$ gets closer to is maximum, and the geodesic on $\PD n$ from $\hat p$ to $\bar q$ is closer to direction defining the horofunction that $\bar q$ maximizes.  

Now, we can show that $b_c(\hat p) - b_c(p) = \delta \leq \eps$.  This follows by Theorem \ref{lem:Lipschitz-gen} since there must be another direction $Q \in G_\eps$ where the corresponding flat contains a geodesic $c'$ such that there are more than $N d/(d+1)$ points $x \in X$ such that $b_{c'}(x) < b_{c'}(\hat p)$ and $|b_c(x) - b_{c'}(x)| \leq \eps$.  Thus there are more than $N d/(d+1)$ points $x \in X$ such that $b_c(x) -\eps < b_c(\hat p)$, and there must be exactly $\lfloor N d/(d+1) + 1\rfloor$ points $x \in X$ such that $b_c(x) < b_c(p)$.  Since $\hat p$ and $p$ lie on the geodesic $c$, we have $\delta \leq \eps$.  

We can now use Lemma \ref{lem:nabla} to bound the difference in values for any horofunction.  $||\nabla b_c||$ is constant for any $b_c$, hence for any other horofunction $b_{c'}$ we have $|b_{c'}(\hat p) - b_{c'}(p)| \leq |b_c(\hat p) - b_c(p)| \leq \eps$ (where they are only equal when $c$ and $c'$ asymptote at opposite points).  Since $p$ is a horo-center point, this concludes the proof.  
\end{proof}

When constructing the set of constraints in each flat $F$ corresponding to a direction in $G_\eps$ we do not need to explicitly consider all ${N \choose Nd/(d+1)}$ subsets of $X$.  Each constraint only depends on $n$ points in $X$, so we can instead consider ${N \choose n} = O(N^{n})$ subsets of $X$ of size $n$, and then check if either of the halfspaces it defines in $F$ contain at least $Nd/(d+1)$ points from $X$ in $O(N)$ time.  Only these constraints need to be considered in the definition of $\hat p$.

\begin{theorem}
Given a set $X \subset \PD n$ of size $N$, (for $n$ constant) we can construct an $\eps$-approximate horo-center point in time $O((\sinh(d_X)/\eps)^{n-1} N^{n+1})$ time.  
\label{thm:apx-hc-point}
\end{theorem}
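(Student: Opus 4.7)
The plan is to simply combine the constructive algorithm described immediately before the theorem with the correctness guarantee of the preceding lemma and a runtime analysis of the LP-type framework.

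First I would argue correctness: the preceding lemma already establishes that any point $\hat p$ satisfying the finite constraint set produced by $(X, G_\eps)$ is an $\eps$-approximate horo-center point. So all that remains is to (a) exhibit such a $\hat p$ via an LP-type computation, and (b) count operations carefully. For (a), I would invoke the earlier lemma that says horoball constraints with the objective $\omega(G) = \min_{p \in \bigcap G} \det(p)$ form an LP-type problem of combinatorial dimension $d = n(n+1)/2$, so given any finite set $\mathcal{C}$ of horoball constraints, one can compute a feasible point (or report infeasibility) in time $O(|\mathcal{C}|)$ using the algorithm of Chazelle--Matou\v{s}ek (cited earlier as \cite{CM96}), with the constant depending only on $d$, hence only on $n$.

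Next I would do the counting. For a fixed direction $Q \in G_\eps$, by Lemma~\ref{lem:horo-hyper-dual} horoballs in the associated flat $F$ correspond bijectively to halfspaces in $\log F \cong \realn{d'}$ (with $d'$ bounded in terms of $n$), and each relevant halfspace constraint is supported by an $n$-tuple of points in $\pi_F(X)$. Therefore one can enumerate candidate constraints in this flat by iterating over all $\binom{N}{n} = O(N^n)$ subsets, each tested in $O(N)$ time for whether one of its two induced halfspaces contains at least $Nd/(d+1)$ projected points. This yields $O(N^{n+1})$ time per direction (the projection $\pi_F(X)$ itself is done once per direction at cost $O(N)$ with $n$ constant). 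The grid $G_\eps \subset \SO n$ has, by the spacing requirement $\theta \le \eps/(\binom{n}{2} \cdot 2\sqrt{2}\sinh(d_X/\sqrt 2))$ and constant intrinsic dimension $\binom{n}{2} = O(1)$ of $\SO n$, cardinality $O((\sinh(d_X)/\eps)^{n-1})$.

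Summing, the total number of generated constraints is $O((\sinh(d_X)/\eps)^{n-1} \cdot N^{n+1})$, and constructing them dominates the subsequent LP-type solve, which is linear in this size. This gives the claimed bound $O((\sinh(d_X)/\eps)^{n-1} N^{n+1})$.

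The main conceptual obstacle has already been handled in the preceding lemma, namely showing that the discretized constraint set suffices via the Lipschitz condition (Theorem~\ref{lem:Lipschitz-gen}) combined with Lemma~\ref{lem:nabla} to transfer the $\eps/2$ Busemann-value slack into an $\eps$ bound for \emph{every} horofunction. The only subtlety in the runtime analysis itself is justifying that restricting candidate constraints to $n$-tuples of points (rather than all $\binom{N}{Nd/(d+1)}$ subsets) does not lose any LP-type constraint: each active constraint in any optimal basis corresponds, in $\log F$, to a supporting hyperplane of a halfspace containing $\geq Nd/(d+1)$ projected points, and such a supporting hyperplane is always determined by $n$ of the projected points, so the $O(N^n)$ enumeration is complete.
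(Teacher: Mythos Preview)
Your proposal is correct and follows essentially the same route as the paper: invoke the preceding lemma for correctness, enumerate $O(N^n)$ candidate hyperplane constraints per direction in $G_\eps$, test each in $O(N)$ time, multiply by $|G_\eps| = O((\sinh(d_X)/\eps)^{n-1})$, and observe that constraint construction dominates the linear-time LP-type solve. One small bookkeeping slip: the total \emph{number} of constraints is $O((\sinh(d_X)/\eps)^{n-1} N^{n})$, not $N^{n+1}$; the extra factor of $N$ belongs to the construction \emph{time}, not the constraint count---but this does not affect the final bound since construction time dominates either way.
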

\begin{proof}
As per the above construction, for each $Q \in G_\eps$ we only need to consider $O(N^{n})$ potential constraints, and each takes $O(N)$ time to evaluate.  
By Lemma \ref{lem:Lipschitz} $G_\eps$ is of size $O((\sinh(d_X)/\eps)^{n-1})$ so the total number of constraints we need to consider in the LP-type problem is $O((\sinh(d_X)/\eps)^{n-1} N^{n})$.  The total runtime is thus dominated by constructing the constraints and takes $O((\sinh(d_X)/\eps)^{n-1} N^{n+1})$ time.   
\end{proof}

\newpage
\bibliography{refs,mrefs-short}
\bibliographystyle{acm}

\appendix

\end{document}